\newcommand{\cmark}{\ding{51}}
\newcommand{\xmark}{\ding{55}}
\newcommand{\dashedrightarrow}[1][2pt]{%
  \settowidth{\@tempdima}{$\rightarrow$}\rightarrow
  \makebox[-\@tempdima]{\hskip-1.5ex\color{white}\rule[0.5ex]{#1}{1pt}}
  \phantom{\rightarrow}
}
\newcommand{\integers}{\mathbb{Z}}
\newcommand{\ppre}{\ensuremath{\mathsf{Pre}}}
\newcommand{\ltrue}{\mathbf{tt}}
\newcommand{\lfalse}{\mathbf{ff}}
\newcommand{\union}{{\cup} }
\newcommand{\ourtool}{\textsc{Vajra}}
\newcommand{\freqhorn}{\textsc{FreqHorn}}
\newcommand{\viap}{\textsc{VIAP}}
\newcommand{\veriabs}{\textsc{VeriAbs}}
\newcommand{\zthree}{\textsc{Z3}}
\newcommand{\booster}{\textsc{Booster}}
\newcommand{\vaphor}{\textsc{Vaphor}}
\newcommand{\PP}{\ensuremath{\mathsf{P}}}
\newcommand{\EE}{\ensuremath{\mathsf{E}}}
\newcommand{\PB}{\ensuremath{\mathsf{PB}}}
\newcommand{\Stmt}{\ensuremath{\mathsf{St}}}
\newcommand{\Stmta}{\ensuremath{\mathsf{St1}}}
\newcommand{\scVar}{\ensuremath{v}}
\newcommand{\lpVar}{\ensuremath{\ell}}
\newcommand{\ArVar}{\ensuremath{A}}
\newcommand{\OP}{\ensuremath{\mathsf{op}}}
\newcommand{\BoolE}{\ensuremath{\mathsf{BoolE}}}
\newcommand{\iif}{\ensuremath{\mathbf{if}}}
\newcommand{\eelse}{\ensuremath{\mathbf{else}}}
\newcommand{\tthen}{\ensuremath{\mathbf{then}}}
\newcommand{\ffor}{\ensuremath{\mathbf{for}}}
\newcommand{\cconst}{\ensuremath{\mathsf{c}}}
\newcommand{\AssignStmts}{\ensuremath{\mathsf{AssignSt}}}
\newcommand{\Unlabeled}{\ensuremath{\mathsf{U}}}
\newcommand{\infocomment}[1]{{\scriptsize\ttfamily\textcolor{darkgray}{\newline$\triangleright$ #1}}}
\newcommand{\true}{\ensuremath{\mathsf{True}}}
\newcommand{\false}{\ensuremath{\mathsf{False}}}
   \def\@citecolor{blue}%
   \def\@urlcolor{blue}%
   \def\@linkcolor{blue}%
\def\orcidID#1{\smash{\href{http://orcid.org/#1}{\protect\raisebox{-1.25pt}{\protect\includegraphics{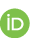}}}}}
\begin{document}

\title{Verifying Array Manipulating Programs with Full-Program Induction}
\author{Supratik Chakraborty\inst{1}\orcidID{0000-0002-7527-7675} \and Ashutosh Gupta\inst{1} \and Divyesh Unadkat\inst{1,2}\orcidID{0000-0001-6106-4719}}

\institute{Indian Institute of Technology Bombay, Mumbai, India\\
  \email{\{supratik,akg\}@cse.iitb.ac.in} \and
  TCS Research, Pune, India\\
  \email{divyesh.unadkat@tcs.com}}

\maketitle
\begin{abstract}
%
%
%
We present a full-program induction technique for proving (a sub-class
of) quantified as well as quantifier-free properties of programs
manipulating arrays of parametric size $N$.
Instead of inducting over individual loops, our technique inducts over
the entire program (possibly containing multiple loops) directly via
the program parameter $N$.
%
Significantly, this does not require generation or use of
loop-specific invariants.
%
%
We have developed a prototype tool {\ourtool}
to assess the efficacy of our technique.
%
We demonstrate the performance of {\ourtool} vis-a-vis several
state-of-the-art tools on a set of array manipulating benchmarks.




\end{abstract}



\section{Introduction}
Programs with loops manipulating arrays are common in a variety of
applications.  Unfortunately, assertion checking in such programs is
undecidable.  Existing tools therefore use a combination of techniques
that work well for certain classes of programs and assertions, and
yield conservative results otherwise.  In this paper, we present a new
technique to add to this arsenal of techniques.  Specifically, we
focus on programs with loops manipulating arrays, where the size of
each array is a symbolic integer parameter $N~(> 0)$.
We allow (a sub-class of) quantified and quantifier-free pre- and
post-conditions that may depend on the symbolic parameter $N$.  Thus,
the problem we wish to solve can be viewed as checking the validity of
a parameterized Hoare triple $\{\varphi(N)\} \;\PP_N\; \{\psi(N)\}$
for all values of $N~(> 0)$, where the program $\PP_N$ computes with
arrays of size $N$, and $N$ is a free variable in $\varphi(\cdot)$ and
$\psi(\cdot)$.  Fig.~\ref{fig:ex}(a) shows an example of one such
Hoare triple, written using {\tt assume} and {\tt assert}.  This
triple effectively verifies that $\sum_{j=0}^{i-1} \left(1 + \sum_{k=0}^{j-1}
6\cdot(k+1)\right) = i^3$ for all $i \in \{0 \ldots N-1\}$, and for
all $N > 0$. 
Although each loop in Fig.~\ref{fig:ex}(a) is simple, their sequential
composition makes it difficult even for state-of-the-art tools like
VIAP~\cite{viap}, VeriAbs~\cite{veriabs}, {\freqhorn}
\cite{madhukar19}, Tiler~\cite{sas17}, Vaphor~\cite{vaphor}, or
Booster~\cite{booster} to prove the post-condition correct. In fact,
none of the above tools succeed in automatically proving the
post-condition in Fig.~\ref{fig:ex}(a).  In contrast, the technique
presented in this paper, called \emph{full-program induction}, proves
the post-condition in Fig.~\ref{fig:ex}(a) correct within a few
seconds.

\begin{figure}[h]
  \begin{tabular}{l|cl}
    \begin{minipage}{0.43\textwidth}
      {\scriptsize
\begin{verbatim}
// assume(true)
1. for (int t1=0; t1<N; t1=t1+1) {
2.   if (t1==0) { A[t1] = 6; }
3.   else { A[t1] = A[t1-1]+6; }
4. }
5. for (int t2=0; t2<N; t2=t2+1) {
6.   if (t2==0) { B[t2] = 1; }
7.   else { B[t2] = B[t2-1]+A[t2-1]; }
8. }
9. for (int t3=0; t3<N; t3=t3+1) {
10.  if (t3==0) { C[t3] = 0; }
11.  else { C[t3] = C[t3-1]+B[t3-1]; }
12.}
// assert(forall i in 0..N-1, C[i]= i^3)
\end{verbatim}
      }

      \begin{center}(a)\end{center}
    \end{minipage}
    &&
    \begin{minipage}{0.6\textwidth}
      {\scriptsize
\begin{verbatim}
// assume(true)
1.  A[0] = 6;
2.  B[0] = 1;
3.  C[0] = 0;
// assert((C[0] = 0^3) and (B[0] = 1^3 - 0^3) and 
//        (A[0] = 2^3 - 2*1^3 + 0^3))
\end{verbatim}
      }
      \vspace{-2em}
      \begin{center}(b)\end{center}
      
      {\scriptsize
\begin{verbatim}
// assume((N > 1) and (C_Nm1[N-2] = (N-2)^3) and
//        (B_Nm1[N-2] = (N-1)^3 - (N-2)^3) and
//        (A_Nm1[N-2] = N^3 - 2*(N-1)^3 + (N-2)^3))
1.  A[N-1] = A_Nm1[N-2] + 6;
2.  B[N-1] = B_Nm1[N-2] + A_Nm1[N-2];
3.  C[N-1] = C_Nm1[N-2] + B_Nm1[N-2];
// assert((C[N-1] = (N-1)^3) and
//        (B[N-1] = N^3 - (N-1)^3) and
//        (A[N-1] = (N+1)^3 - 2*N^3 + (N-1)^3))
\end{verbatim}
      }
      \vspace{-2em}
      \begin{center}(c)\end{center}
      
    \end{minipage}\\
  \end{tabular}
  \vspace{-3ex}
\caption{Original and simplified Hoare triples}
\label{fig:ex}
\end{figure}

Like several earlier approaches~\cite{kinduction}, full-program
induction relies on mathematical induction to reason about programs
with loops.  However, the way in which the inductive claim is
formulated and proved differs significantly. 
Specifically, (i) we \emph{do not require explicit or
  implicit loop-specific invariants} to be provided by the user or
generated by a solver (viz. by constrained Horn clause
solvers~\cite{chc,quic3,madhukar19} or recurrence
solvers~\cite{viap,aligators}), (ii) we \emph{induct on the full
  program} (possibly containing multiple loops) with parameter $N$ and
not on iterations of individual loops in the program, and (iii) we
perform \emph{non-trivial correct-by-construction code
  transformations}, whenever feasible, to simplify the inductive step
of reasoning.  The combination of these factors often reduces
reasoning about a program with multiple loops to reasoning about one
with fewer (sometimes even none) and ``simpler'' loops, thereby
simplifying proof goals.  In this paper, we demonstrate this, focusing
on programs with sequentially composed, but non-nested loops.

As an illustration of simplifications that can result from application
of full-program induction, consider the problem in
Fig.~\ref{fig:ex}(a) again.  Full-program induction reduces checking
the validity of the Hoare triple in Fig.~\ref{fig:ex}(a) to checking
the validity of two ``simpler'' Hoare triples, represented in
Figs.~\ref{fig:ex}(b) and \ref{fig:ex}(c). Note that the programs in
Figs.~\ref{fig:ex}(b) and \ref{fig:ex}(c) are loop-free.  In addition,
their pre- and post-conditions are quantifier-free.  The validity of
these Hoare triples (Figs.~\ref{fig:ex}(b) and \ref{fig:ex}(c)) can
therefore be easily proved, e.g. by bounded model checking~\cite{bmc}
with a back-end SMT solver like {\zthree}~\cite{z3}.  Note that the
value computed in each iteration of each loop in Fig.~\ref{fig:ex}(a)
is data-dependent on previous iterations of the respective loops.
Hence, none of these loops can be trivially translated to a set of
parallel assignments.

Invariant-based techniques, viz.
\cite{Gopan,Halbwachs,Rival,ArrayCousotCL11,Gulwani,Srivastava09,Dirk07,Jhala},
are popularly used to reason about array manipulating programs.  If we
were to prove the assertion in Fig.~\ref{fig:ex}(a) using such
techniques, it would be necessary to use appropriate loop-specific
invariants for each of the three loops in Fig.~\ref{fig:ex}(a). The
weakest loop invariants needed to prove the post-condition in this
example are: $\forall i \in [0...t1-1]\; (A[i] = 6i + 6)$ for the
first loop (lines $1$-$4$), $\forall j \in [0...t2-1]\; (B[j] = 3j^2 +
3j+1) \wedge (A[j] = 6j + 6)$ for the second loop (lines $5$-$8$), and
$\forall k \in [0...t3-1]\; (C[k] = k^3) \wedge (B[k] = 3k^2+3k+1)$
for the third loop (lines $9$-$12$).  Unfortunately, automatically
deriving such quantified non-linear loop invariants is far from
trivial. Template-based invariant generators,
viz.~\cite{houdini,daikon}, are among the best-performers when
generating such complex invariants.  However, their abilities are
fundamentally limited by the set of templates from which they choose.
We therefore choose not to depend on invariants for individual loops
in our work at all.  
Instead of
inducting over the iterations of each individual loop,
we propose to reason about the entire
program (containing one or more loops) directly, while inducting on
the parameter $N$.  
Needless to say, each approach has its own strengths and limitations,
and the right choice always depends on the problem at hand.  Our
experiments show that full-program induction is able to solve several
difficult problem instances with an off-the-shelf SMT solver
({\zthree}) at the back-end, which other techniques either fail to
solve these instances, or rely on sophisticated recurrence solvers.

The primary contributions of our work can be summarized as follows.
\begin{itemize}
\item We introduce the notion of {\em full-program induction} for
  reasoning about assertions in programs with loops manipulating arrays.
\item We present practical algorithms for full-program induction.
\item We describe a prototype tool {\ourtool} that implements the
  algorithms, using an off-the-shelf SMT solver, viz. {\zthree}, at
  the back-end to discharge verification conditions.  {\ourtool}
  outperforms several state-of-the-art tools on a suite of
  array-manipulating benchmark programs.
\end{itemize}



\paragraph{\bfseries Related Work.}

Earlier work on inductive techniques can be broadly categorized into
those that require loop-specific invariants to be provided or
automatically generated, and those that work without them. 
Requiring a ``good'' inductive invariant for every loop
in a program effectively shifts the onus of assertion checking to that
of invariant generation. Among techniques that do not require explicit
inductive invariants or mid-conditions for each loop, there are some
that require loop invariants to be implicitly generated by a
constraint solver.  These include techniques based on constrained Horn
clause solving~\cite{chc,quic3,madhukar19,vaphor}, acceleration and
lazy interpolation for arrays~\cite{booster} and those that use
inductively defined predicates and recurrence
solving~\cite{viap,aligators}, among others.  Thanks to the impressive
capabilities of modern constraint solvers and the effectiveness of
carefully tuned heuristics for stringing together multiple solvers,
this approach has shown a lot of promise in recent years.  However, at
a fundamental level, these formulations rely on solving implicitly
specified loop invariants garbed as constraint solving problems.
There are yet other techniques, such as that in~\cite{lopstr12}, that
truly do not depend on loop invariants being generated.  In fact, the
technique of~\cite{lopstr12} comes closest to our work in principle.
However, \cite{lopstr12} imposes severe restrictions on the input
programs, and the example in Fig.~\ref{fig:ex} does not meet these
restrictions.  Therefore, the technique of~\cite{lopstr12} is
applicable only to a small part of the program-assertion space over
which our technique works.  Techniques such as tiling~\cite{sas17}
reason one loop at a time and apply only when loops have simple data
dependencies across iterations (called \emph{non-interference} of
tiles in~\cite{sas17}).  It effectively uses a slice of the
post-condition of a loop as an inductive invariant, and also requires
strong enough mid-conditions to be generated in the case of
sequentially composed loops.  We circumvent all of these requirements
in the current work.  For some other techniques for analyzing array
manipulating programs, please
see~\cite{ArrayCousotCL11,Jhala,verifast}.





\vspace{-2ex}
\section{Overview of Full-program Induction}
\vspace{-1ex}
\label{sec:overview}
Recall that our objective is to check the validity of the
parameterized Hoare triple $\{\varphi(N)\} \;\PP_{N}\; \{\psi(N)\}$
for all $N > 0$.  At a high level, our approach works like any other
inductive technique.  Thus, we have a base case, where we verify that
the parameterized Hoare triple holds for some small values of $N$, say
$0 < N \le M$.  We then hypothesize that $\{\varphi(N-1)\}
\;\PP_{N-1}\; \{\psi(N-1)\}$ holds for some $N > M$, and try to show
that this implies $\{\varphi(N)\} \;\PP_{N}\; \{\psi(N)\}$.  While
this sounds simple in principle, there are several technical
difficulties en route.  Our contribution lies in overcoming these
difficulties algorithmically for a large class of programs and
assertions, thereby making {\em full-program induction} a viable and
competitive technique for proving properties of array manipulating
programs.

We rely on an important, yet reasonable, assumption that can be stated
as follows: \emph{For every value of $N~(> 0)$, every loop in
  ${\PP_N}$ can be statically unrolled a fixed number (say $f(N)$) of
  times to yield a loop-free program $\widehat{\PP_N}$ that is
  semantically equivalent to ${\PP_N}$.}  Note that this does not
imply that reasoning about loops can be translated into loop-free
reasoning.  In general, $f(N)$ is a non-constant function, and hence,
the number of unrollings of loops in $\PP_N$ may strongly depend on
$N$. In our experience, loops in a vast majority of array manipulating
programs (including Fig.~\ref{fig:ex}(a)) satisfy the above
assumption.  Consequently, the base case of our induction reduces to
checking a Hoare triple for a loop-free program.  Checking such a
Hoare triple is easily achieved by compiling the pre-condition,
program and post-condition into an SMT formula, whose
(un)satisfiability can be checked with an off-the-shelf back-end SMT
solver.

The inductive step is the most complex one, and is the focus of the
rest of the paper.  Recall that the inductive hypothesis asserts
that $\{\varphi(N-1)\}\;{\PP_{N-1}}\;\{\psi(N-1)\}$ is valid.  To make
use of this hypothesis in the inductive step, we must relate the
validity of $\{\varphi(N)\}\;{\PP_N}\;\{\psi(N)\}$ to that of
$\{\varphi(N-1)\}\;{\PP_{N-1}}\;\{\psi(N-1)\}$.  We propose doing
this, whenever possible, via two key notions -- that of ``difference''
program and ``difference'' pre-condition.  Given a parameterized
program ${\PP_N}$, intuitively the ``difference'' program ${\partial
  \PP_N}$ is one such that ${\PP_{N-1}};{\partial \PP_N}$ is
semantically equivalent to ${\PP_N}$, where ``;'' denotes sequential
composition.  It turns out that for our purposes, the semantic
equivalence alluded to above is not really necessary; it suffices to
have ${\partial \PP_N}$ such that
$\{\varphi(N)\}\;{\PP_N}\;\{\psi(N)\}$ is valid iff $\{\varphi(N)\}\;
{\PP_{N-1}};{\partial \PP_N} \;\{\psi(N)\}$ is valid.  We will
henceforth use this interpretation of a ``difference'' program.  The
``difference'' pre-condition ${\partial \varphi(N)}$ is a formula such
that (i) $\varphi(N) \rightarrow (\varphi(N-1) \wedge {\partial
  \varphi(N)})$ and (ii) the execution of $\PP_{N-1}$ doesn't affect
the truth of ${\partial \varphi(N)}$.  Computing ${\partial \PP_N}$
and ${\partial \varphi(N)}$ is not easy in general, and we discuss
this in detail in the rest of the paper.  


Assuming we have ${\partial \PP_N}$ and ${\partial \varphi(N)}$ with
the properties stated above, the proof obligation
$\{\varphi(N)\}\;\PP_N\;\{\psi(N)\}$ can now be reduced to proving
$\{\varphi(N-1)\}\;\PP_{N-1}\;\{\psi(N-1)\}$ and $\{\psi(N-1) \wedge
{\partial \varphi(N)}\} \;{\partial \PP_{N}}\; \{\psi(N)\}$.  The
first triple follows from the inductive hypothesis.  Proving the
second triple may require strengthening the pre-condition, say by a
formula $\ppre(N-1)$, in general.  Recalling that we are in the
inductive step of mathematical induction, we formulate the new proof
sub-goal in such a case as $\{(\psi(N-1) \wedge \ppre(N-1)) \wedge
{\partial \varphi(N)}\} \;{\partial \PP_{N}}\; \{\psi(N) \wedge
\ppre(N)\}$.  While this is somewhat reminiscent of loop invariants,
observe that $\ppre(N)$ is \emph{not} really a loop-specific
invariant. Instead, it is analogous to computing an invariant for the
entire program, possibly containing multiple loops.  Specifically, the
above process strengthens both the pre- and post-condition of
$\{\psi(N-1) \wedge {\partial \varphi(N)}\} \;{\partial \PP_{N}}\;
\{\psi(N)\}$ simultaneously using $\ppre(N-1)$ and $\ppre(N)$,
respectively. The strengthened post-condition of the resulting Hoare
triple may, in turn, require a new pre-condition $\ppre'(N-1)$ to be
satisfied. This process of strengthening the pre- and post-conditions
of the Hoare triple involving $\partial{\PP_N}$ can be iterated until
a fix-point is reached, i.e. no further pre-conditions are needed for
the parameterized Hoare triple to hold.  While the fix-point was
quickly reached for all benchmarks we experimented with, we also discuss
how to handle cases where the above process may not converge easily.  Note
that since we effectively strengthen the pre-condition of the Hoare
triple in the inductive step, for the overall induction to go through,
it is also necessary to check that the strengthened assertions hold at
the end of each base case check.
%
%
The technique described above is called \emph{full-program induction},
and the following theorem guarantees its soundness.
\begin{theorem}
\label{thm:full-prog-ind-sound}
Given $\{\varphi(N)\}\;\PP_N\;\{\psi(N)\}$, suppose the following are
true:
\begin{enumerate}
\item For $N > 1$, $\{\varphi(N)\} \;\PP_{N-1};{\partial \PP_N} \;
  \{\psi(N)\}$ holds iff $\{\varphi(N)\} \;\PP_{N} \; \{\psi(N)\}$
  holds.
\item For $N > 1$, there exists a formula ${\partial \varphi(N)}$ such that
  (a) ${\partial \varphi(N)}$ doesn't refer to any program variable or
array element modified in $\PP_{N-1}$, and
(b) $\varphi(N) \rightarrow \varphi(N-1) \wedge {\partial \varphi(N)}$.
\item There exists an integer $M \ge 1$ and a parameterized formula $\ppre(M)$
  such that
  (a) $\{\varphi(N)\}\;\PP_N\;\{\psi(N)\}$ holds for $0 < N \le M$,
  (b) $\{\varphi(M)\}\;\PP_M\;\{\psi(M)\wedge\ppre(M)\}$ holds, and
  (c) $\{\psi(N-1)\wedge \ppre(N-1) \wedge
  {\partial \varphi(N)}\}\;{\partial \PP_N}\; \{\psi(N) \wedge \ppre(N)\}$ holds
  for $N > M$.
\end{enumerate}
Then $\{\varphi_N\}\;\PP_N\;\{\psi_N\}$ holds for all $N \ge 1$.
\end{theorem}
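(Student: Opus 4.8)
The plan is to prove the statement by induction on $N$, following exactly the skeleton the authors have set up: the base cases $0 < N \le M$ are handled by hypothesis~3(a), and the inductive step will glue together the inductive hypothesis with the ``difference'' triple supplied by hypotheses~1, 2 and 3(c). To make the induction carry the extra information $\ppre$, I would strengthen the induction statement: rather than proving just $\{\varphi(N)\}\,\PP_N\,\{\psi(N)\}$, I would prove by strong induction on $N \ge 1$ the conjunction that $\{\varphi(N)\}\,\PP_N\,\{\psi(N)\}$ holds \emph{and} $\{\varphi(N)\}\,\PP_N\,\{\psi(N)\wedge\ppre(N)\}$ holds. Hypotheses 3(a) and 3(b) give both conjuncts for all $N \le M$ (for $N < M$ the second conjunct is not literally asserted, so one either extends the base-case assumption to include it, or observes that the argument below only ever invokes the strengthened form at $N-1 \ge M$; I would note this explicitly).

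For the inductive step, fix $N > M$ and assume the strengthened statement holds for all smaller positive arguments, in particular for $N-1$: so $\{\varphi(N-1)\}\,\PP_{N-1}\,\{\psi(N-1)\wedge\ppre(N-1)\}$ is valid. First I would establish the triple $\{\varphi(N)\}\,\PP_{N-1};{\partial\PP_N}\,\{\psi(N)\wedge\ppre(N)\}$. Take any state satisfying $\varphi(N)$. By hypothesis~2(b), $\varphi(N-1)\wedge{\partial\varphi(N)}$ holds in that state. Running $\PP_{N-1}$: the inductive hypothesis applied to the precondition $\varphi(N-1)$ yields that $\psi(N-1)\wedge\ppre(N-1)$ holds afterwards, while hypothesis~2(a) guarantees ${\partial\varphi(N)}$ is untouched by $\PP_{N-1}$ and hence still holds. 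So after $\PP_{N-1}$ the state satisfies $\psi(N-1)\wedge\ppre(N-1)\wedge{\partial\varphi(N)}$, which is precisely the precondition of the triple in hypothesis~3(c). Running ${\partial\PP_N}$ from there, hypothesis~3(c) gives $\psi(N)\wedge\ppre(N)$. Composing, $\{\varphi(N)\}\,\PP_{N-1};{\partial\PP_N}\,\{\psi(N)\wedge\ppre(N)\}$ holds. In particular its $\psi(N)$-only weakening $\{\varphi(N)\}\,\PP_{N-1};{\partial\PP_N}\,\{\psi(N)\}$ holds, and by hypothesis~1 this is equivalent to $\{\varphi(N)\}\,\PP_N\,\{\psi(N)\}$, giving the first conjunct at $N$.

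The only real subtlety is transferring the $\ppre(N)$ part back through hypothesis~1, since hypothesis~1 as stated is an equivalence only for the postcondition $\psi(N)$, not for $\psi(N)\wedge\ppre(N)$. I expect this to be the main obstacle, and I would resolve it by the following observation: the ``difference program'' property in hypothesis~1 is really a statement that $\PP_N$ and $\PP_{N-1};{\partial\PP_N}$ are interchangeable as far as the relevant pre/post reasoning goes; in the authors' intended reading (and in their construction) ${\partial\PP_N}$ is obtained so that $\PP_{N-1};{\partial\PP_N}$ is semantically equivalent to $\PP_N$ on the reachable states, so the equivalence in fact holds for \emph{any} postcondition, $\ppre(N)$ included. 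I would therefore either (i) invoke that stronger semantic-equivalence reading — which is how ${\partial\PP_N}$ is actually produced elsewhere in the paper — to conclude $\{\varphi(N)\}\,\PP_N\,\{\psi(N)\wedge\ppre(N)\}$ directly, or (ii) if one wishes to stay strictly within the literal wording of hypothesis~1, note that $\ppre$ is only ever consumed at argument $N-1$ inside hypothesis~3(c), so it suffices to carry the weaker induction statement $\{\varphi(N)\}\,\PP_N\,\{\psi(N)\}$ together with a separately-justified side fact that $\ppre(N)$ is established, the latter coming from the $\PP_{N-1};{\partial\PP_N}$ form whose correctness we have just shown and which computes the same array/variable values as $\PP_N$. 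Either way, both conjuncts hold at $N$, the strong induction closes, and in particular $\{\varphi(N)\}\,\PP_N\,\{\psi(N)\}$ holds for all $N \ge 1$, which is the claim. Finally I would remark that hypothesis~3(b), checked at the base bound $M$, is exactly what seeds $\ppre$ so that the step at $N = M+1$ has its precondition available — i.e. the ``strengthened assertions hold at the end of each base case check'' remark in the text is what makes the strengthened induction statement true at its base.
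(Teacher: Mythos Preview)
Your approach---strong induction on $N$ with the strengthened hypothesis $\{\varphi(N)\}\,\PP_N\,\{\psi(N)\wedge\ppre(N)\}$ for $N \ge M$, using $\psi(N-1)\wedge\ppre(N-1)$ as a mid-condition between $\PP_{N-1}$ and $\partial\PP_N$---is exactly the argument the paper gives, only spelled out more carefully.

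You are right to flag the subtlety with condition~1: as literally stated, it only transfers the postcondition $\psi(N)$, not $\psi(N)\wedge\ppre(N)$, from $\PP_{N-1};\partial\PP_N$ back to $\PP_N$, and the strengthened induction hypothesis genuinely needs the latter transfer at every step beyond $M$. The paper's own proof glosses over precisely this point: it asserts that the required triple $\{\varphi(N-1)\}\,\PP_{N-1}\,\{\psi(N-1)\wedge\ppre(N-1)\}$ ``holds if'' 3(b) and 3(c) hold, without saying how one passes from the composed program $\PP_{N-2};\partial\PP_{N-1}$ back to $\PP_{N-1}$ for the strengthened postcondition. Your fix~(i), appealing to the semantic-equivalence reading of $\partial\PP_N$---which is how the paper actually constructs $\partial\PP_N$ and which makes condition~1 hold for \emph{any} postcondition---is the correct resolution, and with it the induction closes cleanly. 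Your fix~(ii) is not really a separate route: the phrase ``computes the same array/variable values as $\PP_N$'' \emph{is} the semantic-equivalence assumption, so (ii) collapses into (i). In short, the proposal is correct modulo this acknowledged strengthening of condition~1, and it matches the paper's intended (if slightly under-specified) proof.
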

\begin{proof}
For $0 < N \le M$, condition 3(a) ensures that
$\{\varphi(N)\}\;\PP_N\;\{\psi(N)\}$ holds.  For $N > M$, note that by
virtue of condition 1 and 2(b), $\{\varphi(N)\}\;\PP_N\;\{\psi(N)\}$
holds if $\{\varphi(N-1) \wedge {\partial
  \varphi(N)}\}\;\PP_{N-1};{\partial \PP_N}\;\{\psi(N) \wedge
\ppre(N)\}$ holds. With $\psi(N-1) \wedge \ppre(N-1)$ as a
mid-condition, and by virtue of condition 2(a), the latter Hoare
triple holds for $N > M$ if $\{\varphi(M)\}\;\PP_{M}\;\{\psi(M) \wedge
\ppre(M)\}$ holds and $\{\psi(N-1) \wedge \ppre(N-1) \wedge {\partial
  \varphi(N)}\}\;{\partial \PP_N}\;\{\psi(N) \wedge \ppre(N)\}$ holds
for all $N > M$. Both these triples are seen to hold by virtue of
conditions 3(b) and (c). \qed
\end{proof}

\vspace{-4ex}
\section{Algorithms to perform Full-program Induction}
\vspace{-1ex}
\label{sec:details}
We now discuss the {\em full-program induction}
algorithm, focusing on generation of three crucial components:
\emph{difference program} ${\partial \PP_N}$, \emph{difference
  pre-condition} ${\partial \varphi(N)}$, and the formula $\ppre(N)$
for strengthening pre- and post-conditions.

\vspace{-.2in}
\subsection{Preliminaries}
\label{sec:prelim}
We consider array manipulating programs generated by the grammar shown
below (adapted from~\cite{sas17}).  
\begin{center}
\begin{tabular}{rcl}
  {\PB} & ::= & \Stmt\\
  \Stmt & ::= & {\scVar} := \EE ~$\mid$~ {\ArVar}[\EE] := \EE ~$\mid$~
                {\iif}(\BoolE) {\tthen} {\Stmt} {\eelse} \Stmt ~$\mid$~
                \Stmt~;~\Stmt ~$\mid$~\\
        &     & {\ffor} ({\lpVar} := 0; {\lpVar} $<$ \EE; {\lpVar} := {\lpVar}+1) ~\{{\Stmta}\}\\
                \Stmta & ::= & {\scVar} := \EE ~$\mid$~ {\ArVar}[\EE] := \EE  ~$\mid$~
                       {\iif}(\BoolE) {\tthen} {\Stmta} {\eelse} \Stmta
                       ~$\mid$~ \Stmta~;~\Stmta \\
 \EE & ::=  &\EE ~\OP~ \EE ~$\mid$~ {\ArVar}[\EE] ~$\mid$~ {\scVar} ~$\mid$~ {\lpVar} ~$\mid$~ {\cconst} ~$\mid$~ $N$ \\
 \OP & ::= & + ~$\mid$~ - ~$\mid$~ * ~$\mid$~ / \\
 \BoolE & ::= & \EE ~$\mathsf{relop}$ \EE ~$\mid$~ {\BoolE} $\mathsf{AND}$ {\BoolE} ~$\mid$~ $\mathsf{NOT}$ {\BoolE} ~$\mid$~ {\BoolE} $\mathsf{OR}$ {\BoolE}
\end{tabular}
\end{center}

\noindent
This grammar restricts programs to have non-nested loops.  While this
limits the set of programs to which our technique currently applies,
there is a large class of useful programs, with possibly long
sequences of loops, that are included in the scope of our work.  In
reality, our technique also applies to a subclass of programs with
nested loops.  However, characterizing this class of programs through
a grammar is a bit unwieldy, and we avoid doing so for reasons of
clarity. A program $\PP_N$ is a tuple $(\mathcal{V}, \mathcal{L},
\mathcal{A}, {\PB}, N)$, where $\mathcal{V}$ is a set of scalar
variables, $\mathcal{L} \subseteq \mathcal{V}$ is a set of scalar loop
counter variables, $\mathcal{A}$ is a set of array variables, ${\PB}$
is the program body, and $N$ is a special symbol denoting a positive
integer parameter. In the grammar shown above, we assume ${\ArVar} \in
\mathcal{A}$, ${\scVar} \in \mathcal{V} \setminus \mathcal{L}$,
${\lpVar} \in \mathcal{L}$ and ${\cconst}\in \integers$.  Furthermore,
``$\mathsf{relop}$'' is assumed to be one of the relational operators
and ``$\mathsf{op}$''is an arithmetic operator from the set \{+, -, *,
/\}.  We also assume that each loop $L$ has a unique loop counter
variable $\ell$ which is initialized at the beginning of $L$ and is
incremented by $1$ at the end of each iteration. Assignments in the
body of $L$ are assumed not to update $\ell$.
Finally, for each loop with termination condition $\ell < \EE$, we
assume that $\EE$ is an expression in terms of $N$. We denote by
$k_L(N)$ the number of times loop $L$ iterates in the program with
parameter $N$. We verify Hoare triples of the form $\{\varphi(N)\}
\;\PP_N\; \{\psi(N)\}$, where $\varphi(N)$ and $\psi(N)$ are either
universally quantified formulas of the form $\forall I\, \left(\Phi(I,
N) \implies \Psi(\mathcal{A}, \mathcal{V}, I, N)\right)$ or
quantifier-free formulas of the form $\Xi(\mathcal{A}, \mathcal{V},
N)$. In the above, $I$ is a sequence of array index variables, $\Phi$
is a quantifier-free formula in the theory of arithmetic over
integers, and $\Psi$ and $\Xi$ are quantifier-free formulas in the
combined theory of arrays and arithmetic over integers. 

Static single assignment (SSA)~\cite{ssa} is a well-known technique
for renaming scalar variables such that a variable is written at most
once in a program.  For our purposes, we also wish to rename arrays so
that each loop updates its own version of an array and multiple writes
to an array element within the same loop happen on different versions
of the array.  Array SSA~\cite{arrayssa} renaming has been studied
earlier in the context of compilers to achieve this goal.  We propose
using SSA renaming for both scalars and arrays as a pre-processing
step of our analysis.  Therefore, we assume henceforth that the input
program is SSA renamed (for both scalars and arrays).  We also assume
that the post-condition is expressed in terms of these SSA renamed
scalar and array variables.

We represent a program 
using a {\em control flow graph} $G = (Locs, Edges, \mu)$, where
$Locs$ denotes the set of control locations (nodes) of the program,
$Edges \subseteq Locs \times Locs \times \{\ltrue, \lfalse,
\Unlabeled\}$ represents the flow of control and $\mu: Locs
\rightarrow {\AssignStmts}$ $\union$ ${\BoolE}$ annotates every node
in $Locs$ with either an assignment statement (of the form ${\scVar}
:= \EE$ or ${\ArVar}[\EE] := \EE$) from the set of assignment
statements {\AssignStmts}, or a Boolean condition.  Two distinguished
control locations, called $\mathsf{Start}$ and $\mathsf{End}$ in
$Locs$, represent the entry and exit points of the program.  An edge
$(n_1, n_2, label)$ represents flow of control from $n_1$ to $n_2$
without any other intervening node. It is labeled $\ltrue$ or
$\lfalse$ if $\mu(n_1)$ is a Boolean condition, and is labeled
$\Unlabeled$ otherwise.  If $\mu(n_1)$ is a Boolean condition, there
are two outgoing edges from $n_1$, labeled $\ltrue$ and $\lfalse$
respectively, and control flows from $n_1$ to $n_2$ along $(n_1, n_2,
label)$ only if $\mu(n_1)$ evaluates to $label$. If $\mu(n_1)$ is an
assignment statement, there is a single outgoing edge from $n_1$, and
it is labeled $\Unlabeled$.  Henceforth, we use CFG to refer to the
control flow graph.

A CFG may have cycles due to the presence of loops in the program.
A \emph{back-edge} of a loop is an edge from the node corresponding to
the last statement in the loop body to the node representing the loop head.
An \emph{exit-edge} is an edge from the loop head to a node outside the
loop body. An \emph{incoming-edge} is an edge to the loop head from a
node outside the loop body. We assume that every loop has exactly one
\emph{back-edge}, one \emph{incoming-edge} and one \emph{exit-edge}.
For technical reasons, and without loss of generality, we also assume that
the \emph{exit-edge} of a loop always goes to a ``nop'' node (say, having
a statement {\tt x = x;}).

Given a program, the program dependence graph (or PDG) $G = (V, DE,
CE)$ represents data and control dependencies among program
statements. Here, $V$ denotes vertices representing assignment
statements and boolean expressions, $DE \subseteq V \times V$ denotes
data dependence edges and $CE \subseteq V \times V$ denotes control
dependence edges. Standard dataflow analysis identifies dependencies
between program variables and thereby among statements. Dependence
between statements updating array elements requires a more careful
analysis.  Let $S_1$ and $S_2$ be two statements in loops $L_1$ and
$L_2$ where there is a control-flow path from $S_1$ to $S_2$ in the
CFG. Suppose $S_1$ is of the form $A[f(i_1, N)] = F(\ldots);$ where
$f$ is an array index expression, $i_1$ is the loop counter variable
of $L_1$, and $F$ is an arbitrary expression. Suppose $S_2$ is of the
form $X = G(A[g(i_2, N)]);$, where $X$ is a variable or array element,
$G$ is an arbitrary expression, and $g$ is an array index expression.
\begin{definition}
\label{def:loopdepends}
We say that $S_2$ in $L_2$ \textbf{depends} on $S_1$ in $L_1$ if there
exists $i_1, i_2$ such that $0 \leq i_1 < k_{L_1}(N)$ and $0 \leq i_2
< k_{L_2}(N)$ and $f(i_1, N) = g(i_2, N)$.
\end{definition}



\begin{algorithm}[!t]
  \caption{\footnotesize \textsc{ComputeRefinedPDG}($\PP_N$ : Program)}
  \label{alg:compute-pdg}
  \scriptsize
  \begin{algorithmic}[1]
    \State $G(V, DE, CE)$ := \textsc{ConstructPDG}($\PP_N$);
    \If{$\exists v, n, n'.\; (n,n') \in DE \wedge is\text{-}array(v) \wedge v \in \mathit{def}(n) \wedge v \in \mathit{uses}(n')$}
      \If{$n$ is part of a loop $L$}
        \State $\ell$ := loop counter of $L$;
        \State Let $\phi(n)$ be the constraint $(0 \leq \ell < k_{L})$;
      \Else
        \State Let $\phi(n)$ be $true$;
      \EndIf
      \If{$n'$ is part of a loop $L'$}
        \State $\ell'$ := loop counter of $L'$;
        \State Let $\phi'(n')$ be the constraint $(0 \leq \ell' < k_{L'})$;
      \Else
        \State Let $\phi'(n')$ be $true$;
      \EndIf      
      \If{$\phi(n) \wedge \phi(n') \wedge \left(\mathit{subscript}(v, n) = \mathit{subscript}(v,n')\right)$ is unsatisfiable}
        \State $DE$ = $DE \setminus \{(n, n')\}$;  \Comment{Remove dependence edges with non-overlapping subscripts}
      \EndIf
    \EndIf
    \State \Return $G(V, DE, CE)$;
  \end{algorithmic}
\end{algorithm}



The routine \textsc{ComputeRefinedPDG} shown in Algorithm
\ref{alg:compute-pdg} constructs and refines the program dependence
graph $G = (V, DE, CE)$ for the input program $\PP_N$. It uses the
function \textsc{ConstructPDG} (line 1) based on the technique of
\cite{pdg} to create an initial graph. For a node $n$ in $G$, let
$\mathit{def}(n)$ and $\mathit{uses}(n)$ refer to the set of
variables/array elements defined and used, respectively, in the
statement/boolean expression corresponding to $n$. Similarly, let
$\mathit{subscript}(v,n)$ refer to the index expression of the array
element $v$ referred to at node $n$. Predicate $is\text{-}array(v)$
evaluates to true if the $v$ is an array element and false if $v$ is a
scalar variable. Note that lines 2-14 of \textsc{ComputeRefinedPDG}
removes data dependence edges between nodes of $G$ that do not satisfy
Definition~\ref{def:loopdepends}. 




\vspace{-.15in}
\subsection{Core Modules in the Technique}

\paragraph{\bfseries Peeling the Loops.}
To relate $\PP_N$ to $\PP_{N-1}$, we first ensure that the
corresponding loops in both programs iterate the same number of times
by \emph{peeling} extra iterations from the loops in $\PP_N$. This is
done by routine \textsc{PeelAllLoops} shown in Algorithm
\ref{alg:peelloops}. The algorithm first makes a copy, viz. $\PP^p_N$,
of the input CFG $\PP_N$.  Let $\textsc{Loops}(\PP^p_N)$ denote the
set of loops of $\PP^p_N$, and let $k_L(N)$ and $k_L(N-1)$ denote the
number of times loop $L$ iterates in $\PP^p_N$ and $\PP^p_{N-1}$
respectively. The difference $k_L(N) - k_L(N-1)$, computed in line 5,
gives the extra iterations of loop $L$ in $\PP^p_N$. If this difference
is not a constant, we currently report a failure of our technique
(line 6). Otherwise, routine \textsc{PeelSingleLoop} transforms loop
$L$ of $\PP^p_N$ as follows: it replaces the termination condition
$(\ell < k_L(N))$ of $L$ by $(\ell < k_L(N-1))$.  It also peels (or
unrolls) the last $(k_L(N) - k_L(N-1))$ iterations of $L$ and adds
control flow edges such that the the peeled iterations are executed
immediately after the loop body is iterated $k_L(N-1)$ times.
Effectively, \textsc{PeelSingleLoop} unrolls/peels the last
$(k_L(N)-k_L(N-1))$ iterations of loop $L$ in $\PP^p_N$. The
transformed CFG is returned as the updated $\PP^p_N$ in line 7.
In addition, \textsc{PeelSingleLoop} also returns the set $Locs'$
of all CFG nodes newly added while peeling the loop $L$.  
The overall updated CFG and the set of all peeled nodes obtained
after peeling all loops in $\PP^p_N$ is returned in line 9.

\begin{algorithm}[!t]
  \caption{\footnotesize \textsc{PeelAllLoops}$\left((Locs, Edges, \mu): \mbox{ CFG of } \PP_N \right)$}
  \label{alg:peelloops}
  \scriptsize
  \begin{algorithmic}[1]
  \State $\PP^p_N := (Locs^p, Edges^p, \mu^p)$, where $L^p = Locs$, $Edges^p = Edges$, $\mu^p = \mu$; \Comment{Copy of $\PP_N$}
  \State $peelNodes$ := $\varnothing$; 
  \For{each loop $L \in \textsc{Loops}( \PP^p_N )$}
  \label{alg:line:canon-peel-loop}
    \State Let $k_L({N})$ be the expression for iteration count of $L$ in $\PP^p_N$;
    \State $peelCount := \textsc{Simplify}(k_L({N})-k_L({N-1}))$;
    \If{$peelCount$ is non-constant}
     {\bf throw} ``Failed to peel non-constant number of iterations'';
     \EndIf
     \State $\langle\PP^p_N, Locs'\rangle := \textsc{PeelSingleLoop}(\PP^p_N, L, k_L({N-1}), peelCount)$;
     \infocomment{Transforms loop $L$ so that last $peelCount$ iterations of $L$ are peeled/unrolled. Updated CFG and newly created CFG nodes for the peeled iterations are returned by \textsc{PeelSingleLoop}.}
     \State $peelNodes$ := $peelNodes$ $\union$ $Locs'$;
     \EndFor
     \State \Return $\langle\PP^p_N, peelNodes\rangle$;
     \vspace{-1ex}
  \end{algorithmic}
\end{algorithm}





\vspace{-1ex}
\begin{lemma}
  \label{lemma:alg-peelloops}
  $\{\varphi_N\}\;\PP_N\;\{\psi_N\}$ holds iff $\{\varphi_N\}\;\PP^p_N\;\{\psi_N\}$ holds.
\end{lemma}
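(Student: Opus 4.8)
The plan is to argue semantic equivalence between $\PP_N$ and $\PP^p_N$ and then transfer validity of the Hoare triple across this equivalence. Since $\varphi_N$ and $\psi_N$ are identical in the two triples, it suffices to show that $\PP_N$ and $\PP^p_N$ compute the same input-output relation on all program variables and arrays (for every fixed value of the parameter $N$). Then, for any state $\sigma$ satisfying $\varphi_N$, running $\PP_N$ or $\PP^p_N$ from $\sigma$ yields the same final state, so one final state satisfies $\psi_N$ iff the other does; this gives both directions of the ``iff'' simultaneously.

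The first step is to reduce to a single loop: \textsc{PeelAllLoops} iterates \textsc{PeelSingleLoop} over the loops of $\PP^p_N$, and since the grammar guarantees loops are non-nested and sequentially composed, peeling one loop does not disturb the others. So by a straightforward induction on the number of loops it is enough to prove that a single application of \textsc{PeelSingleLoop} to a loop $L$ preserves semantics. The second step is the core argument: let $c = k_L(N) - k_L(N-1)$ be the (constant, by line~6) peel count. \textsc{PeelSingleLoop} rewrites the loop ``\texttt{for}($\ell := 0;\ \ell < k_L(N);\ \ell{:=}\ell{+}1$) $\{\Stmta\}$'' into the loop with bound $k_L(N-1)$ followed by $c$ explicit copies of the body $\Stmta$ (with the appropriate values of $\ell$, namely $k_L(N-1), \ldots, k_L(N)-1$). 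This is exactly the standard loop-peeling transformation, whose correctness follows by unrolling the operational semantics of the \texttt{for} loop: executing the loop body $k_L(N)$ times with counter values $0,1,\dots,k_L(N)-1$ is the same computation as executing it $k_L(N-1)$ times and then $c$ more times, because the loop counter $\ell$ is never modified inside $\Stmta$ (an explicit assumption in the preliminaries) and is incremented by exactly $1$ per iteration, so the counter sequence is deterministic and contiguous. One must also check that the loop terminates after exactly $k_L(N)$ iterations in $\PP_N$ (which is the definition of $k_L(N)$) and that $k_L(N)\ge k_L(N-1)\ge 0$, so the peeled iterations are genuinely present and nonnegative in number; the failure branch at line~6 ensures we only proceed when $c$ is a well-defined constant. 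Finally, the control-flow-edge surgery described for \textsc{PeelSingleLoop} (redirecting the exit edge through the peeled nodes, which end at the mandated ``nop'' node) realizes precisely this unrolled computation in the CFG, so the CFG after peeling has the same execution traces, restricted to observable state, as $\PP_N$.

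The main obstacle is purely at the level of rigor rather than idea: making ``\textsc{PeelSingleLoop} implements loop peeling'' precise requires pinning down the operational semantics of CFGs and of the \texttt{for} construct, and then doing a small but fiddly induction on $c$ (or on the number of loop iterations) showing that the state reached after $k_L(N-1)$ iterations of the shortened loop, followed by the $c$ peeled bodies, equals the state reached after $k_L(N)$ iterations of the original loop. The only subtlety worth flagging is the role of the assumption that $\ell$ is not updated in the loop body: without it, peeling could change which counter values the body sees. Since the paper states this assumption and also states the single-back-edge/single-exit-edge normal form, the induction goes through cleanly, and I would present it at the level of ``this is the standard peeling transformation, correct by induction on iteration count, using that the loop counter is read-only in the body.''
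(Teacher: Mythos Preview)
The paper states this lemma without proof; it is treated as the evident correctness of loop peeling. Your proposal is exactly the natural argument one would supply: reduce to a single loop (loops are non-nested and sequentially composed), then show that executing the body $k_L(N)$ times with a read-only, unit-incremented counter is the same as executing it $k_L(N-1)$ times followed by the $c$ unrolled copies, by a routine induction on the iteration count. There is nothing in the paper to compare against, and your argument has no real gap; the side condition $k_L(N)\ge k_L(N-1)$ that you flag is indeed implicitly assumed by the algorithm (line~6 only checks constancy, not sign), so noting it is appropriate.
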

\vspace{-1ex}

\vspace{-.1in}

\paragraph{\bfseries Affected Variable Analysis.}
Before we discuss the generation of ${\partial \PP_N}$, we present an
analysis that identifies variables/array elements that may take
different values in $\PP_N$ and $\PP_{N-1}$.  For example, the first
$k_L(N-1)$ iterations of $L$ in $\PP_N$ may not be semantically
equivalent to the (entire) $k_L(N-1)$ iterations of $L$ in
$\PP_{N-1}$. This is because the semantics of statements in $L$ may
depend on the value of $N$ either directly or indirectly.  We call
variables/array elements updated in such statements as \emph{affected}
variables.
For every loop with statements having potentially different semantics
in $\PP_N$ and $\PP_{N-1}$, the difference program ${\partial \PP_N}$
must have a version of the loop with statements that restore the
effect of the first $k_L(N-1)$ iterations of $L$ in $\PP_N$ after the
(entire) $k_L(N-1)$ iterations of $L$ in $\PP_{N-1}$ have been
executed. Furthermore, for statements in $\PP_N$ that are not enclosed
within loops but have potentially different semantics from the
corresponding statements in $\PP_{N-1}$, ${\partial \PP_N}$ must also
rectify the values of variables/array elements
updated in such statements.




\begin{algorithm}[!t]
  \caption{\footnotesize \textsc{ComputeAffected}($\PP_N$ : Program, $peelNodes$ : Peeled Statements)}
  \label{alg:pdg-affected}
  \scriptsize
  \begin{algorithmic}[1]
    \State $G(V, DE, CE)$ := \textsc{ComputeRefinedPDG}($\PP_N$);
    \State $\mathsf{AffectedVars} := \{N\}$;  \Comment{$N$ is in the affected set}
    \Repeat
      \State $\mathsf{WorkList}$ := $V \setminus peelNodes$; \Comment{all non-peeled nodes in $G$}
      \While{$\mathsf{WorkList} \not = \{\}$}
        \State Remove a node $n$ from $\mathsf{WorkList}$;
        \If{$\exists v.\; is\text{-}array(v) \wedge (\exists u. \; u \in subscript(v,n) \wedge u \in \mathsf{AffectedVars})$} 
          \State $\mathsf{AffectedVars}$ := $\mathsf{AffectedVars}$ $\cup$ $v$;
        \EndIf
         \If{$\exists v.\; v \in uses(n)$} 
           \If{$\exists m.\; m \in \mathit{reaching}\text{-}\mathit{def}(v, n) \wedge m \in peelNodes$} 
             \State $\mathsf{AffectedVars}$ := $\mathsf{AffectedVars}$ $\cup$ $def(n)$;
           \EndIf
           \If{$\exists m.\; m \in \mathit{reaching}\text{-}\mathit{def}(v, n) \wedge def(m) \in \mathsf{AffectedVars}$} 
             \State $\mathsf{AffectedVars}$ := $\mathsf{AffectedVars}$ $\cup$ $def(n)$;
           \EndIf
           \If{$v \in \mathsf{AffectedVars} \wedge n$ is a assignment node}
             \State $\mathsf{AffectedVars}$ := $\mathsf{AffectedVars}$ $\cup$ $def(n)$;
           \EndIf
           \If{$v \in \mathsf{AffectedVars} \wedge n$ is a predicate node}
             \For{each edge $(n,n') \in CE$}
               \State $\mathsf{AffectedVars}$ := $\mathsf{AffectedVars}$ $\cup$ $def(n')$;
             \EndFor
           \EndIf
         \EndIf
       \EndWhile
    \Until{$\mathsf{AffectedVars}$ ~does ~not ~change}
    \State \Return $\mathsf{AffectedVars}$;
  \end{algorithmic}
\end{algorithm}






Subroutine \textsc{ComputeAffected}, shown in Algorithm
\ref{alg:pdg-affected}, computes the set of \emph{affected} variables
$\PP_N$.  We first construct the program dependence graph by calling
the function \textsc{ComputeRefinedPDG} (line 1) defined in Algorithm
\ref{alg:compute-pdg}.  Let $\mathsf{AffectedVars}$ represent the set
of \emph{affected} variables/array elements.  We initialize it (line
2) with variable $N$ since its value is different in $\PP_N$ and
$\PP_{N-1}$.
For a node $n$ in the PDG $G$, we use
$\mathit{reaching}\text{-}\mathit{def}(v,n)$ to refer to the set of
nodes where the variable/array element $v$ is defined and the
definition reaches its use at node $n$. In line 4, we collect nodes in
the graph that are not the ones peeled from loops in $\PP_N$. The loop
in lines 5-18 iterates over the collected nodes to identify affected
variables. If a variable in the index expression of an array access is
affected then that array element is considered affected (lines 7-8). A
definition at a node $n$ is affected (marked in line 11) if any
variable $v$ used in the statement (checked in line 9) is defined in a
\emph{peeled} node (line 10). Similarly if the reaching definition of
$v$ is affected (line 12) the definition at $n$ is affected (line
13). A variable defined in terms of an affected variable is also
deemed to be affected (lines 14-15). Finally, a variable definition
that is control dependent on an affected variable is also considered
affected (lines 16-18).  The computation of affected variables is
iterated until the set $\mathsf{AffectedVars}$ saturates.

\vspace{-1ex}
\begin{lemma}
\label{lemma:dfa-sound}
Variables/Array elements not present in $\mathsf{AffectedVars}$ have
the same value after $k_L(N-1)$ iterations of its enclosing
loop (if any) in $P_{N-1}$ as in $P_N$.
\end{lemma}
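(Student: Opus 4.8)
The plan is to compare the execution $\sigma_{N-1}$ of $\PP_{N-1}$ with the execution $\sigma_N$ of $\PP_N$ in which every loop $L$ is run for only its first $k_L(N-1)$ iterations (equivalently, the state reached in the peeled program $\PP^p_N$ of Lemma~\ref{lemma:alg-peelloops} just before each loop's peeled tail), and to prove equality of values by well-founded induction along the dependence structure, mirroring the clauses of \textsc{ComputeAffected}. First I would fix a correspondence between the \emph{statement instances} (a CFG node together with the tuple of iteration counters of its enclosing loops) executed in $\sigma_{N-1}$ and the non-peeled statement instances executed in $\sigma_N$: a node outside all loops maps to itself, and iteration $i<k_L(N-1)$ of a node inside loop $L$ maps to iteration $i$ of the same node in the other run. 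Both runs perform the same number of iterations of each loop, so this correspondence is a bijection onto the non-peeled instances; only the \emph{values} computed may differ, since statement bodies still mention $N$ in $\sigma_N$ but $N-1$ in $\sigma_{N-1}$.

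The statement I would actually prove, by induction over the reaching-definition order on statement instances, is a simultaneous strengthening: \emph{(i)} a statement instance is reached in $\sigma_{N-1}$ iff its image is reached in $\sigma_N$, provided every predicate instance it is control-dependent on uses only variables/array elements outside $\mathsf{AffectedVars}$; and \emph{(ii)} if such an instance $s$ satisfies $\mathit{def}(s)\notin\mathsf{AffectedVars}$, then $s$ writes the same value, to the same scalar or array cell, in both runs. In the inductive step I would read the needed facts off the contrapositives of the clauses of Algorithm~\ref{alg:pdg-affected}: $\mathit{def}(s)\notin\mathsf{AffectedVars}$ together with lines~14--15 forces every use of $s$ to be non-affected; lines~12--13 force the target of every reaching definition of a use of $s$ to be non-affected, so the induction hypothesis applies to the defining instances (which, being non-affected, are also non-peeled, hence in the domain of the correspondence); lines~10--11 rule out any reaching definition of a use of $s$ coming from a peeled node, which is what makes the ``$N$ vs.\ $N-1$'' discrepancy across sequentially composed loops harmless; lines~7--8 force every variable appearing in an array subscript of $s$ to be non-affected, so by the hypothesis the index expression evaluates to the same cell; and lines~16--18 prevent $s$ from being control-dependent on a predicate that reads an affected variable, which together with part~(i) gives that $s$ is reached in one run iff in the other. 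Since $N\in\mathsf{AffectedVars}$ by line~2 and is propagated by exactly these clauses, the evaluation of $s$ never ``sees'' the difference between $N$ and $N-1$. Specialising the conclusion to the last instance that writes a given $v\notin\mathsf{AffectedVars}$ before its enclosing loop has completed $k_L(N-1)$ iterations (or before the corresponding control point, if $v$ is loop-free) yields the lemma.

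I expect the principal obstacle to be justifying the induction order. The program dependence graph has cycles coming from loop-carried dependencies, so one cannot induct on it directly; instead one inducts on the reaching-definition relation lifted to statement instances of the unrolled program and must argue this relation is well-founded. This rests on two facts: array SSA renaming makes the dependences within a single iteration of a loop body (and within the loop-free code) acyclic, and the loop counter is monotone and not modified in the body, so every loop-carried dependence points from an instance in iteration $i$ to one in a strictly earlier iteration $i'<i$ or to code preceding the loop. The second, more clerical point is to make precise the interplay between the reaching-definition analysis---computed on the peeled program, whose loops still ``remember'' the peeled iterations---and the truncated comparison, so as to confirm that the conservative peeled-node test of lines~10--11 really does flag every instance whose actual reaching definition could differ between the two runs, in particular when a later loop reads array cells produced by the peeled tail of an earlier one. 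Both points are routine but need care; the remainder is a mechanical case split over the clauses of \textsc{ComputeAffected}.
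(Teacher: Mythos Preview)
The paper does not supply a proof of this lemma; it is stated without argument, as are the other lemmas in Section~\ref{sec:details}. Consequently there is nothing to compare your proposal against. Judged on its own, your plan is sound and is the natural way to establish soundness of a fixpoint-based affected-variables analysis: lift the dependence relation from CFG nodes to statement \emph{instances} so that loop-carried edges become strictly decreasing in the iteration index, induct along that well-founded order, and in the inductive step discharge each way a value could differ between the two runs by appealing to the contrapositive of the matching clause in \textsc{ComputeAffected}. Your identification of the two delicate points---well-foundedness of the instance-level reaching-definition order (handled via array SSA plus monotone loop counters) and the adequacy of the peeled-node test in lines~10--11 for cross-loop array reads---is accurate, and neither hides a genuine obstacle for the class of programs the paper admits (non-nested loops, single back-edge, counter not assigned in the body).

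One small refinement worth making explicit: the lemma as stated quantifies over ``its enclosing loop (if any)'', so for a variable written both inside and outside loops you should phrase the invariant at the level of \emph{program points} rather than per-variable, asserting that at every non-peeled point reached in both runs the store agrees on all non-affected locations. Your part~(i)/(ii) formulation already does essentially this, but stating it pointwise avoids ambiguity about which write is ``the'' definition when a scalar is assigned in several places.
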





\vspace{-.1in}

\paragraph{\bfseries Generating the Difference Program $\mathbf{\partial \PP_N}$.}

The routine \textsc{ProgramDiff} in Algorithm~\ref{alg:diff-generation}
shows how the difference program is computed.
We peel each loop in the program and collect the list of peeled nodes (line 1)
using Algorithm~\ref{alg:peelloops}.
We then compute the set of \emph{affected} variables (line 2) using Algorithm
\ref{alg:pdg-affected}.
The difference program $\partial \PP_N$ inherits the skeletal structure of
the program $\PP_N$ after peeling each loop (line 4).
The algorithm then traverses the CFG of each loop in $\PP_N$ and
removes the loops (lines 16-17) that do not update any \emph{affected
}variables from ${\partial \PP_N}$.  For every CFG node in other
loops, it determines the corresponding node type (assignment or
branch) and acts accordingly (lines 7-14).  To explain the intuition
behind the steps of this algorithm, we use the convention that all
variables and arrays of $\PP_{N-1}$ have the suffix {\tt \_Nm1} (for
N-minus-1), while those of $\PP_N$ have the suffix {\tt \_N}.  This
allows us to express variables/array elements of $\PP_N$ in terms of
the corresponding variables/array elements of $\PP_{N-1}$ in a
systematic way in ${\partial \PP_N}$, given that the intended
composition is $\PP_{N-1};{\partial \PP_N}$.

For assignment statements using simple arithmetic operators ({\tt
  +,-,*,/}), the sub-routine \textsc{AssignmentDiff} in
Algorithm~\ref{alg:diff-generation} computes a ``difference''
statement as follows.  We assume that \textsc{Nodes}($L$) returns the
set of CFG nodes in loop $L$.  For every assignment statement of the
form {\tt v = E;} in $L$, a corresponding statement is generated in
${\partial \PP_N}$ that expresses {\tt v\_N} in terms of {\tt v\_Nm1}
and the difference (or ratio) between versions of variables/arrays
that appear as sub-expressions in {\tt E} in $\PP_{N-1}$ and $\PP_N$.
For example, the statement {\tt A\_N[i] = B\_N[i] + v\_N;} in
$\PP_{N}$ gives rise to the ``difference'' statement {\tt A\_N[i] =
  A\_Nm1[i] + (B\_N[i] - B\_Nm1[i]) + (v\_N - v\_Nm1);} in ${\partial
  \PP_N}$.  Similarly, the statement {\tt A\_N[i] = B\_N[i] * v\_N;}
in $\PP_{N}$ gives rise to the ``difference'' statement {\tt A\_N[i] =
  A\_Nm1[i] * (B\_N[i] / B\_Nm1[i]) * (v\_N / v\_Nm1);} under the
assumption {\tt B\_Nm1[i] * v\_Nm1} $\neq 0$.

\begin{algorithm}[!t]
  \caption{\footnotesize \textsc{ProgramDiff}($\PP_N$: program)}
  \label{alg:diff-generation}
  \scriptsize
  \begin{algorithmic}[1]
    \State $\langle\PP_N, peelNodes\rangle$ := \textsc{PeelAllLoops}($\PP_N$);
    \State $\mathsf{AffectedVars}$ := \textsc{ComputeAffected}$(\PP_N, peelNodes)$; 
    \State Let the CFG of $\PP_{N}$ be $(Locs, E, \mu)$; 
    \State $ \partial \PP_N := (Locs', E', \mu') $,
    where $Locs' := Locs$, $E' := E$, and $\mu' := \emptyset$;
    \For{each loop $L \in \textsc{Loops}( \PP_N )$}

   \If{ $\exists v$ such that $v$ is updated in $L$ and $v \in \mathsf{AffectedVars}$}
   \For{each node $n \in \textsc{Nodes}(L)$}
   \State $st_{N} := \mu(n)$;
   \If{$st_{N}$ is of the form $w_{N} := r^1_{N}$ {\tt op} $r^2_{N}$}
     \State $\mu'(n) := $ \textsc{AssignmentDiff}( $w_{N} := r^1_{N}$ {\tt op} $r^2_{N}$ );
   \ElsIf{$st_{N}$ is of the form $w_{N} := w_{N}$ {\tt op} $r^1_{N}$ wherein $w_{N}$ is a scalar}
     \State $\mu'(n) := $ \textsc{AggregateAssignmentDiff}( $L$, $w_{N} := w_{N}$ {\tt op} $r^1_{N}$ );
   \Else \Comment{$st_{N}$ is a conditional statement}
      \State $\mu'(n) := $ \textsc{BranchDiff}( $st_{N}$, $\mathsf{AffectedVars}$ );
    \EndIf
    \EndFor
    \Else \Comment{Remove loop $L$ from CFG of ${\partial \PP_N}$}
    \State $(n_1,n,\Unlabeled) := \textsc{IncomingEdge}(L)$;  $(n,n_2,\lfalse):= \textsc{ExitEdge}(L)$;
    \State $E' := E' \setminus \{(n_1,n, \Unlabeled), (n, n_2, \lfalse)\}$ $\union$ $\{(n_1, n_2,\Unlabeled)\}$;  $Locs' := Locs' \setminus \mathsf{Nodes}(L)$;
    \EndIf
    \EndFor

    \State \Return ${\partial \PP_N}$;
  \end{algorithmic}

  \vspace{2ex}
  \textsc{AssignmentDiff}( $w_{N} := r^1_{N}$ {\tt op} $r^2_{N}$ )
  \begin{algorithmic}[1]
    \State Let {\tt invop} be the arithmetic inverse operator of {\tt op};
    \infocomment{$+$ and $-$ are inverse operators of each other, and so are $\times$ and $\div$}
    \If{{\tt op} $\in \{+, \times\}$}
    \State \Return $w_N$ := $w_{Nm1}$ {\tt op} ($\textsc{Simplify}$($r^1_N$ {\tt invop} $r^1_{Nm1}$) {\tt op} $\textsc{Simplify}$($r^2_N$ {\tt invop} $r^2_{Nm1}$));
    \ElsIf{{\tt op} $\in \{-, \div \}$}
    \State \Return $w_N$ := $w_{Nm1}$ {\tt invop} ($\textsc{Simplify}$($r^1_N$ {\tt op} $r^1_{Nm1}$) {\tt op} $\textsc{Simplify}$($r^2_N$ {\tt op} $r^2_{Nm1}$));
    \Else
    \State {\bf throw} ``Specified operator not handled'';
    \EndIf    
  \end{algorithmic}

  \vspace{1ex}
  \textsc{AggregateAssignmentDiff}( $L$: loop, $w_{N} := w_{N}$ {\tt op} $r^1_{N}$ )
  \begin{algorithmic}[1]
      \State $n_{fresh}  := \textsc{FreshNode}()$;  $\mu'(n_{fresh})$ := ($w_N := w_{Nm1}$);  $Locs' := Locs'$ $\union$ $\{n_{fresh}\}$;
      \State $ (n',n'',\Unlabeled):= \textsc{IncomingEdge}(L)$;
      \State $E' := E' \setminus \{(n',n'',\Unlabeled)\}$ $\union$ $\{(n',n_{fresh}, \Unlabeled),(n_{fresh},n'',\Unlabeled)\}$;
      \If{{\tt op} $\in \{+, *\}$}
	  \State \Return $w_N$ := $w_N$ {\tt op} $\textsc{Simplify}$($r^1_N$ {\tt invop} $r^1_{Nm1}$);
      \ElsIf{{\tt op} $\in \{-, \div \}$}
           \State \Return $w_N$ := $w_N$ {\tt op} $\textsc{Simplify}$($r^1_N$ {\tt op} $r^1_{Nm1}$);
      \Else
           \State {\bf throw} ``Specified operator not handled'';
      \EndIf    
  \end{algorithmic}

  \vspace{1ex}
  \textsc{BranchDiff}( $st_N$: branch condition, $\mathsf{AffectedVars}$ : set of affected variables )
  \begin{algorithmic}[1]
    \State Let $n$ be CFG node corresponding to $st_N$;
    \If{($\exists v$ such that v is read in $st_N$ and $v \in \mathsf{AffectedVars}$) $\vee$ ($st_N \not= st_{N-1}$ is satisfiable)}
      \State {\bf throw} ``Branch conditions in $\PP_N$ and $\PP_{N-1}$ may not evaluate to same value'';
    \Else
      \State \Return $st_{N-1}$;
    \EndIf    
  \end{algorithmic}

\end{algorithm}



There are additional kinds of statements that need special processing
when generating ${\partial \PP_N}$.  These relate to accumulation of
differences (or ratios).  For example, if $\PP_{N}$ has a loop {\tt for(i = 0;
  i < N; i++)  sum\_N = sum\_N + A\_N[i]; } then the
difference {\tt A\_N[i] - A\_Nm1[i]} is aggregated over all indices
from $0$ through $N-2$.  In this case, the corresponding
``difference'' loop in ${\partial \PP_N}$ has the following form: {\tt
  sum\_N = sum\_Nm1; for (i = 0; i < N-1; i++) { sum\_N = sum\_N +
    (A\_N[i] - A\_Nm1[i]);}}. A similar aggregation for multiplicative
ratios can also be defined.  Sub-routine
\textsc{AggregateAssignmentDiff} in Algorithm~\ref{alg:diff-generation} 
generates these ``difference'' statements.

Note that expressions like {\tt (B\_N[i] - B\_Nm1[i])} or {\tt
  (v\_N/v\_Nm1)} can often be simplified from the already generated
part of ${\partial \PP_N}$.  For example, if the already generated
part has a statement of the form {\tt B\_N[i] = B\_Nm1[i] + expr1;} or
{\tt v\_N = expr2*v\_Nm1;}, and if {\tt expr1} and {\tt expr2} are
constants or functions of $N$ and loop counters, then we can use {\tt
  expr1} for {\tt B\_N[i] - B\_Nm1[i]} and {\tt expr2} for {\tt
  v\_N/v\_Nm1} respectively.  We use these optimizations aggressively
in the function $\textsc{Simplify}$ used in \textsc{AssignmentDiff}
and \textsc{AggregateAssignmentDiff}.
  
\begin{algorithm}[!t]
  \caption{\footnotesize \textsc{SimplifyDiff}($\partial \PP_N$: difference program)}
  \label{alg:simpdiff}
  \scriptsize
  \begin{algorithmic}[1]
    \State $ \partial \PP_N := (Locs, E, \mu)$
    \State $ \partial \PP'_N := (Locs', E', \mu')$,
    where $Locs' := Locs$, $E' := E$, and $\mu' := \mu$;
    \For{each loop $L \in \textsc{Loops}(\partial \PP_N)$}
          \State $(n_1,n,\Unlabeled) := \textsc{IncomingEdge}(L)$;  $(n,n_2,\lfalse):=\textsc{ExitEdge}(L)$;
        \If{Loop body of $L$ is of the form $w_{N} := w_{N}$ {\tt op} $expr$,
        wherein $w_{N}$ is a scalar variable}
          \State $n_{acc}$ = $\textsc{FreshNode}()$;
          \If{{\tt op} $\in \{ +, -\}$}
            \State $\mu'(n_{acc})$ := ($w_{N} := w_{N}$ {\tt op} $\textsc{Simplify}(k_L(N-1) * expr)$);
          \ElsIf{{\tt op} $\in \{*, \div \}$}
            \State $\mu'(n_{acc})$ := ($w_{N} := w_{N}$ {\tt op} $\textsc{Simplify}(expr^{k_L(N-1)})$);
          \Else {} {\bf throw} ``Specified operator not handled'';
          \EndIf
          \State $E' := E'$ - $\{(n_1,n, \Unlabeled), (n, n_2, \lfalse)\}$ $\union$ $\{(n_1, n_{acc},\Unlabeled), (n_{acc}, n_2, \Unlabeled)\}$;
          \State $Locs'$ := $Locs' - \textsc{Nodes}(L)$ $\union$ $\{ n_{acc} \}$ ;
        \EndIf
        \If{Loop body of $L$ is of the form $w_{N} := w_{Nm1}$ or $w_{N} := w_{N}$}
          \State $E' := E' - \{(n_1,n, \Unlabeled), (n, n_2, \lfalse)\}$ $\union$ $\{(n_1, n_2,\Unlabeled)\}$;  $Locs' := Locs' - \textsc{Nodes}(L)$;
        \EndIf
    \EndFor
    \State \Return $\partial \PP'_N$
  \end{algorithmic}
\end{algorithm}

For every CFG node representing a conditional branch in $\PP_{N}$,
Algorithm \textsc{BranchDiff} is used to determine if the result of
the condition check can differ in $\PP_N$ and $\PP_{N-1}$.  If not,
the conditional statement can be retained as such in the
``difference'' program.  Otherwise, our current technique cannot
compute ${\partial \PP_N}$ and we report a failure of our technique
(see body of \textsc{BranchDiff}). For example, the conditional
statement {\tt if (t3 == 0) } in line 10 of Fig.~\ref{fig:ex}(a)
behaves identically in $\PP_{N-1}$ and $\PP_N$, and therefore can be
used as is in the loop in the difference program.


\vspace{-.7ex}
\begin{lemma}
\label{lemma:diff-gen-sound}
$\partial \PP_N$ generated by \textsc{ProgramDiff}
is such that, for all $N > 1$,
$\{\varphi(N)\} \;\PP_{N-1};{\partial \PP_N} \; \{\psi(N)\}$ holds iff
$\{\varphi(N)\} \;\PP_{N} \; \{\psi(N)\}$ holds.
\end{lemma}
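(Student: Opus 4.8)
The plan is to argue that the composition $\PP_{N-1};{\partial \PP_N}$, as constructed by \textsc{ProgramDiff}, computes exactly the same final state (on all variables and array elements that the post-condition $\psi(N)$ can mention) as $\PP_N$ does, for every input state satisfying $\varphi(N)$; the stated iff then follows immediately. First I would invoke Lemma~\ref{lemma:alg-peelloops} to reduce to the peeled program $\PP^p_N$, so that every loop $L$ iterates exactly $k_L(N-1)$ times in the ``loop part'' of $\PP^p_N$, with the last $k_L(N)-k_L(N-1)$ iterations sitting as explicit peeled straight-line (or branch) code. Since \textsc{ProgramDiff} works on this peeled CFG and $\partial\PP_N$ inherits its skeletal structure (line~4), the target equality becomes: running $\PP^p_{N-1}$ (which is $\PP^p_N$ with $N$ replaced by $N-1$, i.e. also $k_L(N-1)$ loop iterations, but with the $N$-valued expressions evaluated at $N-1$) followed by $\partial\PP_N$ yields the same state as running $\PP^p_N$.

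The heart of the argument is an invariant maintained across the CFG of $\partial\PP_N$, traversed in program order: after the prefix of $\PP_{N-1};{\partial\PP_N}$ up to any point, every variable/array element $v$ carries the value it would have at the corresponding point of $\PP^p_N$. I would establish this by structural/inductive case analysis over the node types handled in lines~7--17 of \textsc{ProgramDiff}, using Lemma~\ref{lemma:dfa-sound} as the base fact: any $v\notin\mathsf{AffectedVars}$ already has its $\PP^p_N$-value after the $\PP_{N-1}$ prefix, so loops updating only unaffected variables are correctly deleted (lines~16--17), and likewise unaffected non-loop statements need no correction. For a loop $L$ updating some affected variable, I would check each statement form in turn. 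For a plain assignment $w_N := r^1_N\ \texttt{op}\ r^2_N$, \textsc{AssignmentDiff} rewrites it so that, given the inductively-assumed values $w_{Nm1}$, and given that $r^j_N$ and $r^j_{Nm1}$ denote the $\PP^p_N$- and $\PP^p_{N-1}$-values of the $j$-th operand, the new statement recomputes exactly $r^1_N\ \texttt{op}\ r^2_N$; here I need the algebraic identities $a\ \texttt{op}\ b = a' \ \texttt{op}\ ((a\ \mathtt{invop}\ a')\ \texttt{op}\ (b\ \mathtt{invop}\ b'))$ for $\texttt{op}\in\{+,\times\}$ and the dual for $\{-,\div\}$, together with the side-condition (non-zero denominators) that the algorithm assumes. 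For the accumulator form $w_N := w_N\ \texttt{op}\ r^1_N$, \textsc{AggregateAssignmentDiff} inserts the initializer $w_N := w_{Nm1}$ on the incoming edge and then, over the $k_L(N-1)$ iterations, accumulates the per-iteration difference/ratio $r^1_N\ \mathtt{invop}\ r^1_{Nm1}$; correctness here is the telescoping/associativity of $\texttt{op}$ over those iterations, plus the observation that the loop counter $\ell$ is unaffected so the two runs range over the same index set $0,\dots,k_L(N-1)-1$. For a branch node, \textsc{BranchDiff} either keeps the condition verbatim (and the guard $\neg(\exists v\ \text{read in }st_N,\ v\in\mathsf{AffectedVars})\wedge\neg(st_N\neq st_{N-1}\ \text{sat})$ guarantees the condition evaluates identically under the $\PP_{N-1}$-values and the $\PP_N$-values, so control flow agrees) or the algorithm throws, in which case the lemma's hypothesis ``$\partial\PP_N$ generated by \textsc{ProgramDiff}'' is vacuous.

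Two further points to close the argument. The peeled iterations and any non-loop statements are carried into $\partial\PP_N$ unchanged by the skeleton copy, and by Lemma~\ref{lemma:dfa-sound} their operands are already at $\PP^p_N$-values when reached, so they replay the $\PP^p_N$ computation verbatim --- I would fold this into the same invariant. Finally, since the established invariant holds at the $\mathsf{End}$ node, the states reached by $\PP_{N-1};{\partial\PP_N}$ and by $\PP_N$ agree on every program variable and array element; as $\psi(N)$ (and $\varphi(N)$, already fixed as the common precondition) refer only to such entities, the Hoare triple $\{\varphi(N)\}\;\PP_{N-1};{\partial\PP_N}\;\{\psi(N)\}$ is valid exactly when $\{\varphi(N)\}\;\PP_N\;\{\psi(N)\}$ is. I expect the main obstacle to be the accumulator case: making precise what ``the per-iteration operand difference is a constant or a function of $N$ and loop counters'' means (this is exactly what \textsc{Simplify} exploits), and verifying that the telescoped accumulation is sound even when the operand $r^1_N$ itself depends on other affected arrays whose own difference statements occur earlier in $\partial\PP_N$ --- i.e. that the program-order traversal really does discharge dependencies in a consistent topological order, which ultimately rests on the SSA renaming assumption and the refined PDG from Algorithm~\ref{alg:compute-pdg}.
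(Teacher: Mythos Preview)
The paper states Lemma~\ref{lemma:diff-gen-sound} without proof, so there is no ``paper's own proof'' to compare against. Your proposal is therefore the only argument on the table, and its overall architecture is sound: reduce to the peeled program via Lemma~\ref{lemma:alg-peelloops}, then maintain a state-correspondence invariant along $\partial\PP_N$ using Lemma~\ref{lemma:dfa-sound} for unaffected variables and a per-statement case analysis for the rest. That is exactly the shape of argument the construction in \textsc{ProgramDiff} invites, and you correctly identify the two delicate points (the accumulator case and the dependency ordering enabled by SSA and the refined PDG).

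One concrete error to fix: the algebraic identity you write, $a~\texttt{op}~b = a'~\texttt{op}~\bigl((a~\mathtt{invop}~a')~\texttt{op}~(b~\mathtt{invop}~b')\bigr)$, is false as stated. For $\texttt{op}=+$ the right-hand side collapses to $a+b-b'$, not $a+b$. What \textsc{AssignmentDiff} actually emits is $w_N := w_{Nm1}~\texttt{op}~(r^1_N~\mathtt{invop}~r^1_{Nm1})~\texttt{op}~(r^2_N~\mathtt{invop}~r^2_{Nm1})$, and its correctness rests on the fact that $w_{Nm1}$ already equals $r^1_{Nm1}~\texttt{op}~r^2_{Nm1}$ (this is where the inductive hypothesis on $\PP_{N-1}$'s execution enters). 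The identity you need is therefore $(a'~\texttt{op}~b')~\texttt{op}~(a~\mathtt{invop}~a')~\texttt{op}~(b~\mathtt{invop}~b') = a~\texttt{op}~b$, with $w_{Nm1}$ supplying the leading $(a'~\texttt{op}~b')$. A second, smaller point: your invariant should be phrased so that it talks about the \emph{$\_N$-suffixed} variables matching their $\PP^p_N$ values at the corresponding node, while the $\_{Nm1}$-suffixed variables hold the final $\PP_{N-1}$ values throughout; as written (``every variable/array element $v$ carries the value\ldots'') it blurs the two name-spaces that the SSA renaming is designed to keep apart.
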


\paragraph{\bfseries Simplifying the Difference Program.}
\label{sec:simp-diff}

While we have described a simple strategy to generate ${\partial
  \PP_N}$ above, this may lead to redundant statements in the naively
generated ``difference'' code.  For example, we may have a loop like
{\tt for (i=0; i < N-1; i++) { A\_N[i] = A\_Nm1[i];}}.  Our
implementation aggressively optimizes and removes such redundant code,
renaming variables/arrays as needed (see routine \textsc{SimplifyDiff}
in Algorithm~\ref{alg:simpdiff}).  The program ${\partial \PP_N}$ may
also contain loops that compute values of variables that can be
accelerated. For example, we may have a loop {\tt for (i=0; i < N-1;
  i++) { sum = sum + 1;}}. Algorithm \textsc{SimplifyDiff} removes
this loop and introduces the statement {\tt sum = sum +
  (N-1);}.  This helps in ${\partial \PP_N}$ having fewer and
simpler loops in a lot of cases.





\vspace{-.7ex}
\begin{lemma}
\label{lemma:simp-sound}
Program $\partial \PP'_N$ generated by \textsc{SimplifyDiff}
is such that, for all $N > 1$,
$\{\varphi(N)\} \;\PP_{N-1};{\partial \PP'_N} \; \{\psi(N)\}$ holds iff
$\{\varphi(N)\} \;\PP_{N-1};{\partial \PP_N} \; \{\psi(N)\}$ holds.
\end{lemma}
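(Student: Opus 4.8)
The plan is to argue that \textsc{SimplifyDiff} performs only \emph{semantics-preserving, context-aware} rewrites on $\partial \PP_N$, so that the composition $\PP_{N-1};\partial\PP'_N$ computes the same final store as $\PP_{N-1};\partial\PP_N$ on every input satisfying $\varphi(N)$, from which the biconditional on Hoare triples is immediate. First I would observe that \textsc{SimplifyDiff} only rewrites loops $L\in\textsc{Loops}(\partial\PP_N)$ whose bodies have one of three syntactic shapes, and leaves everything else (including all nodes outside such loops, and the pre-/post-conditions) untouched; hence it suffices to show, for each rewritten loop in isolation, that the replacement CFG fragment is equivalent to the original loop fragment \emph{as a state transformer}, when placed in the context $\PP_{N-1};(\,\cdot\,)$.

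The argument then splits into the three cases of the algorithm. Case (i): the body is $w_N \sassign w_N\ \mathtt{op}\ expr$ with $w_N$ scalar. Here the key facts are that (a) $w_N$ is \emph{not} among the variables read in $expr$ (otherwise the body would not have the form produced by \textsc{AggregateAssignmentDiff}, which only emits such a node when the original aggregate was $w\sassign w\ \mathtt{op}\ r^1$), so $expr$ evaluates to the same value in every iteration, and (b) the loop runs exactly $k_L(N-1)$ times by construction of $\partial\PP_N$. Therefore iterating $w_N \sassign w_N + expr$ (resp.\ $\times expr$) $k_L(N-1)$ times is equal to the single assignment $w_N \sassign w_N + k_L(N-1)\cdot expr$ (resp.\ $w_N\sassign w_N\cdot expr^{k_L(N-1)}$), which is exactly what $n_{acc}$ does after \textsc{Simplify}; and \textsc{Simplify} itself is assumed sound. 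Case (ii): the body is $w_N \sassign w_{Nm1}$ or $w_N\sassign w_N$; since $w_{Nm1}$ is not modified inside the loop body and the loop is not entered past its first store-stabilizing iteration, executing the body $k_L(N-1)$ times has the same effect as executing it once, which in turn has the same effect as doing nothing to any variable read later (for $w_N\sassign w_N$ this is trivial; for $w_N\sassign w_{Nm1}$ one uses that the subsequent reads of $w_N$ in $\partial\PP_N$ were generated precisely to read this copied value, so replacing the loop by the empty fragment and renaming the later uses of $w_N$ to $w_{Nm1}$ — the ``renaming as needed'' clause — is sound). In each case the edge surgery in lines 13–14 and 16 splices the replacement fragment between the same incoming and exit nodes $(n_1,n_2)$, so control flow elsewhere is unchanged.

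Finally I would assemble these local equivalences: since $\partial\PP'_N$ differs from $\partial\PP_N$ only by a finite sequence of such loop-for-fragment substitutions, and state-transformer equivalence is a congruence with respect to sequential composition and the surrounding CFG context, $\PP_{N-1};\partial\PP'_N$ and $\PP_{N-1};\partial\PP_N$ denote the same partial function on stores; hence one triple holds iff the other does, for all $N>1$. The main obstacle is case (ii) with the copy assignment $w_N\sassign w_{Nm1}$: establishing equivalence there requires reasoning about the \emph{renaming} that \textsc{SimplifyDiff} alludes to but does not spell out, i.e.\ that every downstream use of $w_N$ can be consistently redirected to $w_{Nm1}$ without changing the computed values — this needs the invariant that in $\partial\PP_N$ such a $w_N$ is assigned exactly once (by this loop) before any use, which follows from the SSA renaming assumption on the input program carried through \textsc{ProgramDiff}. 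A secondary subtlety is confirming that the non-entry/exit behaviour (the loop guard being false after $k_L(N-1)$ iterations) matches the ``run zero extra times'' reading, which is exactly Lemma~\ref{lemma:diff-gen-sound}'s setup and can be cited rather than re-proved.
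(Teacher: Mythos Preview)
The paper states Lemma~\ref{lemma:simp-sound} without proof, so there is no argument to compare against; your proposal supplies what the paper omits. Your overall strategy---showing that each rewrite in \textsc{SimplifyDiff} preserves the state transformer of the loop fragment and then invoking congruence of sequential composition---is the natural one and is essentially correct.

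One gap worth tightening: in case~(i) you conclude that $expr$ evaluates identically in every iteration solely from fact~(a), that $w_N$ is not read in $expr$. But the loop also increments the counter $\ell$, and nothing in~(a) rules out $expr$ reading $\ell$; indeed the body emitted by \textsc{AggregateAssignmentDiff} is $w_N \sassign w_N\ \mathtt{op}\ \textsc{Simplify}(r^1_N\ \mathtt{invop}\ r^1_{Nm1})$, where $r^1$ may index an array by $\ell$. The replacement $w_N \sassign w_N\ \mathtt{op}\ k_L(N{-}1)\cdot expr$ is only sound when $expr$ is genuinely loop-invariant, which the paper's motivating example (\texttt{sum = sum + 1}) satisfies but the syntactic guard in line~5 of \textsc{SimplifyDiff} does not enforce on its face. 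You should either argue that this branch fires only after \textsc{Simplify} has reduced $expr$ to a term free of the loop counter (this appears to be the paper's tacit intent, since otherwise $expr$ would be ill-scoped at $n_{acc}$), or add loop-invariance of $expr$ as an explicit side condition.

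Your handling of case~(ii) and the renaming obstacle is honest; the SSA-based justification you sketch is the right lever, and the appeal to Lemma~\ref{lemma:diff-gen-sound} for the iteration-count bookkeeping is appropriate.
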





\vspace{-.2in}
\paragraph{\bfseries Generating the Difference Pre-condition $\mathbf{\partial \varphi(N)}$.}
We now present a simple syntactic algorithm, called \textsc{SyntacticDiff},
for generation of the difference pre-condition ${\partial \varphi(N)}$.
Although this suffices for all our experiments, for the sake of
completeness, we present later a more sophisticated algorithm
for generating ${\partial \varphi(N)}$ simultaneously with $\ppre(N)$.

Formally, given $\varphi(N)$, algorithm \textsc{SyntacticDiff}
generates a formula ${\partial \varphi(N)}$ such that $\varphi(N)
\rightarrow (\varphi(N-1) \wedge {\partial \varphi(N)})$. Observe that
if such a ${\partial \varphi(N)}$ exists, then $\varphi(N) \rightarrow
\varphi(N-1)$ holds as well.  Therefore, we can use the validity of
$\varphi(N) \rightarrow \varphi(N-1)$ as a test to decide the
existence of ${\partial \varphi(N)}$.
%
%

If $\varphi(N)$ is of the syntactic form $\forall i\in \{0 \ldots
N\}\; \widehat{\varphi}(i)$, then ${\partial \varphi(N)}$ is easily
seen to be $\hat{\varphi}(N)$.  
If $\varphi(N)$
is of the syntactic form $\varphi^1(N) \wedge \cdots \wedge
\varphi^k(N)$, then ${\partial \varphi(N)}$ can be computed as
${\partial \varphi^1(N)} \wedge \cdots \wedge {\partial \varphi^k(N)}$.
Finally, if $\varphi(N)$ doesn't belong to any of these syntactic forms or if
condition 2(a) of Theorem~\ref{thm:full-prog-ind-sound} is violated by
the heuristically computed ${\partial \varphi(N)}$, then we
over-approximate ${\partial \varphi_N}$ by $\true$. For a
large fraction of our benchmarks, the pre-condition $\varphi(N)$ was
$\true$, and hence ${\partial \varphi(N)}$ was also $\true$.


\vspace{-.1in}
\paragraph{\bfseries Generating the Formula $\mathbf{\ppre(N-1)}$.}
We use Dijsktra's weakest pre-condition computation to obtain
$\ppre(N-1)$ after the ``difference'' pre-condition ${\partial \varphi(N)}$ and
the ``difference'' program ${\partial \PP_N}$ have been generated.
The weakest pre-condition can always be computed using quantifier
elimination engines in state-of-the-art SMT solvers like Z3 if
${\partial \PP_N}$ is loop-free.  In such cases, we use a set of
heuristics to simplify the calculation of the weakest pre-condition
before harnessing the power of the quantifier elimination engine.
If ${\partial \PP_N}$ contains a loop, it may still be possible to obtain
the weakest pre-condition if the loop doesn't affect the post-condition.
Otherwise, we compute as much of the weakest pre-condition as can be
computed from the non-loopy parts of ${\partial \PP_N}$, and then try to
recursively solve the problem by invoking full-program induction
on ${\partial \PP_N}$ with appropriate pre- and post-conditions.

%

\begin{algorithm}[!t]
  \caption{\footnotesize \textsc{FPIVerify}({$\PP_N$: program, $\varphi(N)$}: pre-condn, {$\psi(N)$}: post-condn)}
  \label{alg:fpi}
  \scriptsize
  \begin{algorithmic}[1]
    \If{Base case check \{$\varphi(1)$\} $\PP_1$ \{$\psi(1)$\} fails}
      \State \Return ``Counterexample found!'';	
    \EndIf

    \State $\partial \varphi(N)$ := \textsc{SyntacticDiff}($\varphi(N)$);

    \State $\partial \PP_N$ := \textsc{ProgramDiff}($\PP_N$);
    \State $\partial \PP_N$ := \textsc{SimplifyDiff}($\partial \PP_N$); \Comment{Simplify and Accelerate loops}

    \State $i := 0$;
    \State $\ppre_i(N) := \psi(N)$;
    \State $c\_\ppre_i(N) := \true$;	\Comment{Cumulative conjoined pre-condition}

    \Do
      \label{line:loop-at-fpi}
      \If{ \{$c\_\ppre_i(N-1) \wedge \psi(N-1) \wedge \partial \varphi(N)$\} $\partial\PP_N$ \{$c\_\ppre_i(N) \wedge \psi(N)$\} }
      \State \Return $\true$;\label{line:loop-fpi-return-true} \Comment{Assertion verified}
      \EndIf
    \State $i := i+1$;
    \State $\ppre_i(N-1) := \textsc{LoopFreeWP}( \ppre_{i-1}(N), \partial\PP_N)$;	\Comment{Dijkstra's $\mathsf{WP}$ sans $\mathsf{WP}$-for-loops}
    \If {no new $\ppre_i(N-1)$ obtained} \Comment{Can happen if ${\partial \PP_N}$ has a loop}
    \State \Return \textsc{FPIVerify}(${\partial \PP_N}$, $c\_\ppre_i(N-1) \wedge \psi(N-1) \wedge \partial \varphi(N)$, $c\_\ppre_i(N) \wedge \psi(N)$);
    \label{line:loop-fpi-recursive}
    \Else
       \State $c\_\ppre_i(N) := c\_\ppre_{i-1}(N) \wedge \ppre_i(N)$;
    \EndIf
    \doWhile{Base case check \{$\varphi(1)$\} $\PP_1$ \{$c\_\ppre_i(1)$\} passes};
    \State \Return $\false$; \Comment{Failed to prove by full-program induction}
  \end{algorithmic}
\end{algorithm}

\vspace{-.1in}
\paragraph{\bfseries Verification by Full-program Induction.}
The basic full-program induction algorithm is presented as routine
\textsc{FPIVerify} in Algorithm~\ref{alg:fpi}.  The main steps of this
algorithm are: checking conditions 3(a), 3(b) and 3(c) of
Theorem~\ref{thm:full-prog-ind-sound} (lines 1, 18 and 10),
calculating the weakest pre-condition of the relevant part of the
post-condition (line 13), 
and strengthening the pre-condition and post-condition with the
weakest pre-condition thus calculated (line 17).  Since the weakest
pre-condition computed in every iteration of the loop ($Pre_i(N-1)$ in
line 13) is conjoined to strengthen the inductive pre-condition
($c\_Pre_i(N)$ in line 17), it suffices to compute the weakest
pre-condition of $Pre_{i-1}(N)$ (instead of $c\_Pre_i(N) \wedge
\psi(N)$) in line 13.  The possibly multiple iterations of
strengthening of pre- and post-conditions is effected by the loop in
lines 9-18.  In case the loop terminates via the {\tt return}
statement in line 11, the inductive claim has been successfully
proved. If the loop terminates by a violation of the condition in line
18, we report that verification by full-program induction failed.  In
case ${\partial \PP_N}$ has loops and no further weakest
pre-conditions can be generated, we recursively invoke
\textsc{FPIVerify} on ${\partial \PP_N}$ in line 15.  This situation
arises if, for example, we modify the example in Fig.~\ref{fig:ex}(a)
by having the statement {\tt C[t3] = N;} (instead of {\tt C[t3] = 0;})
in line 10.  In this case, ${\partial \PP_N}$ has a single loop
corresponding to the third loop in Fig.~\ref{fig:ex}(a).  The
difference program of ${\partial \PP_N}$ is, however, loop-free, and
hence the recursive invocation of full-program induction on ${\partial
  \PP_N}$ easily succeeds.


\vspace{-.1in}
\paragraph{\bfseries Generalized FPI Algorithm.}
\label{sec:fpi-ext}
While algorithm \textsc{FPIVerify} suffices for all of our
experiments, we may not always be so lucky.  Specifically, even if
${\partial \PP_N}$ is loop-free, the analysis may exit the loop in
lines 9-18 of \textsc{FPIVerify} by violating the base case check in
line 18.  To handle (at least partly) such cases, we propose the
following strategy. Whenever a (weakest) pre-condition $\ppre_i(N-1)$
is generated, instead of using it directly to strengthen the current
pre- and post-conditions, we ``decompose'' it into two formulas
$\ppre_i'(N-1)$ and ${\partial \varphi_i'(N)}$ with a two-fold intent:
(a) potentially weaken $\ppre_i(N-1)$ to $\ppre_i'(N-1)$, and (b)
potentially strengthen the difference formula ${\partial \varphi(N)}$
to ${\partial \varphi_i'(N)} \wedge {\partial \varphi(N)}$.  The
checks for these intended usages of $\ppre_i'(N-1)$ and ${\partial
  \varphi_i'(N)}$ are implemented in lines 3, 4, 5, 11 and 17 of
routine \textsc{FPIDecomposeVerify}, shown as
Algorithm~\ref{alg:fpi-ext}.  This routine is meant to be invoked as
\textsc{FPIDecomposeVerify}$(i)$ after each iteration of the loop in
lines 9-18 of routine \textsc{FPIVerify} (so that $\ppre_i(N)$,
$c\_\ppre_i(N)$ etc. are initialized properly).  In general, several
``decompositions'' of $\ppre_i(N)$ may be possible, and some of them
may work better than others.  \textsc{FPIDecompseVerify} permits
multiple decompositions to be tried through the use of the
$\textsc{NextDecomposition}$ and $\textsc{HasNextDecomposition}$
functions.  Lines 22-25 of \textsc{FPIDecomposeVerify} implement a
simple back-tracking strategy, allowing a search of the space of
decompositions of $\ppre_i(N-1)$.  Observe that when we use
\textsc{FPIDecomposeVerify}, we simultaneously compute a difference
formula $({\partial \varphi'_i(N)} \wedge {\partial \varphi(N)})$ and
an inductive pre-condition $(c\_\ppre_{i-1}(N) \wedge \ppre_i'(N))$.

\begin{algorithm}[!t]
  \caption{\footnotesize \textsc{FPIDecomposeVerify}( i : integer )}
  \label{alg:fpi-ext}
  \scriptsize
  \begin{algorithmic}[1]
    \Do
      \State $\langle\ppre_i'(N-1), \partial \varphi_i'(N)\rangle$ := $\textsc{NextDecomposition}(\ppre_i(N-1))$;
      \State Check if (a) $\partial \varphi_i'(N) \wedge \ppre_i'(N-1)  \rightarrow  \ppre_i(N-1)$,\\ 
      \hspace*{0.6in}(b) $\varphi(N) \rightarrow \varphi(N-1) \wedge \left(\partial \varphi'_i(N) \wedge \partial \varphi(N)\right)$, \\
      \hspace*{0.6in}(c) $\PP_{N-1}$ does not update any variable or array element in $\partial \varphi_i'(N)$
      \If{any check in lines 3-5 fails}
      \If{$\textsc{HasNextDecomposition}(\ppre_i(N-1))$}
      \State \textbf{continue};
      \Else
      \State \Return $\false$;
      \EndIf
      \EndIf
      
      \If{\{$c\_\ppre_{i-1}(N-1) \wedge \psi(N-1) \wedge \ppre_i(N-1) \wedge \partial \varphi(N)$\} $\partial\PP_N$ \{$c\_\ppre_{i-1}(N) \wedge \psi(N) \wedge \ppre_i'(N)$\}}
        \State \Return $\true$; \Comment{Assertion verified}
      \Else
        \State $c\_\ppre_i(N) := c\_\ppre_{i-1}(N) \wedge \ppre_i'(N)$;
        \State $i := i+1$;
        \State $\ppre_i(N-1) := \textsc{LoopFreeWP}( \ppre_{i-1}'(N), \partial\PP_N)$; \Comment{Dijkstra's $\mathsf{WP}$ sans $\mathsf{WP}$-for-loops}

        \If {\{$\varphi(1)$\} $\PP_1$ \{$c\_\ppre_{i-1}(1) \wedge \ppre_i(1)$\}  does not hold}
           \State $i := i-1$;
        \Else
           \State $prev\_\partial \varphi(N)$ := $\partial \varphi(N)$;
           \State $\partial \varphi(N)$ := $\partial \varphi'_{i-1}(N) \wedge \partial \varphi(N)$;
           \If{\textsc{FPIDecomposeVerify}$(i)$ returns $\false$}
             \State $i := i-1$;   $\partial \varphi(N)$ := $prev\_\partial \varphi(N)$;
           \Else
             \State \Return $\true$;
          \EndIf
        \EndIf
      \EndIf
      \doWhile{$\textsc{HasNextDecomposition}(\ppre_i(N-1))$};

      \State \Return $\false$;
    \end{algorithmic}
\end{algorithm}





\vspace{-.7ex}
\begin{lemma}
\label{lemma:fpi-alg}
Algorithms \textsc{FPIVerify} and \textsc{FPIDecomposeVerify} ensure
conditions 2 and 3 of Theorem \ref{thm:full-prog-ind-sound} upon
successful termination.
\end{lemma}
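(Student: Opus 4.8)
The plan is to read the three objects demanded by Theorem~\ref{thm:full-prog-ind-sound} straight off the data the two routines hold at the instant they return $\true$: take $M:=1$; take for $\ppre(N)$ the cumulative conjoined pre-condition then in scope (that is, $c\_\ppre_i(N)$ in \textsc{FPIVerify} and $c\_\ppre_{i-1}(N)\wedge\ppre_i'(N)$ in \textsc{FPIDecomposeVerify}, where $i$ is the current loop index); and take for the difference pre-condition the formula $\partial\varphi(N)$ then in scope, conjoined with the current decomposition candidate $\partial\varphi_i'(N)$ when the return happens at line~11 of \textsc{FPIDecomposeVerify}. With these witnesses fixed, each of conditions 2(a), 2(b), 3(a), 3(b) and 3(c) is produced by one of the tests the routines perform, so the proof reduces to matching tests to conditions, plus an induction on recursion depth for the recursive returns.

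First I would discharge condition~2. In \textsc{FPIVerify}, $\partial\varphi(N)$ is the output of \textsc{SyntacticDiff}, which by construction returns a formula satisfying $\varphi(N)\rightarrow(\varphi(N-1)\wedge\partial\varphi(N))$ and referring to no variable or array element modified by $\PP_{N-1}$, falling back to $\true$ (which has both properties trivially); so 2(a) and 2(b) hold on entry and are never perturbed there. In \textsc{FPIDecomposeVerify}, $\partial\varphi(N)$ changes only by conjoining a $\partial\varphi_i'(N)$ (line~21), and only after the tests of lines~3--5 have checked $\varphi(N)\rightarrow\varphi(N-1)\wedge(\partial\varphi_i'(N)\wedge\partial\varphi(N))$ (check~(b)) and that $\PP_{N-1}$ modifies nothing occurring in $\partial\varphi_i'(N)$ (check~(c)). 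Since both ``implied by $\varphi(N)$ together with $\varphi(N-1)$'' and ``disjoint from the write-set of $\PP_{N-1}$'' survive conjunction, an induction over the accepted decompositions shows the final $\partial\varphi(N)$ still satisfies 2(a),(b); a rejected decomposition is rolled back (lines~22--23) to a value that already had the property by the induction hypothesis.

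Next I would discharge condition~3 with $M:=1$. Part~3(a) is verbatim the base-case check in line~1 of \textsc{FPIVerify}, which must have passed, else a counterexample rather than $\true$ is returned. Part~3(c) is, up to the choices of $\ppre$ and $\partial\varphi(N)$ above, the Hoare triple checked in line~10 of \textsc{FPIVerify}, respectively line~11 of \textsc{FPIDecomposeVerify}: for \textsc{FPIVerify}, with $\ppre:=c\_\ppre_i$ the checked triple's pre- and post-conditions are literally $\psi(N-1)\wedge\ppre(N-1)\wedge\partial\varphi(N)$ and $\psi(N)\wedge\ppre(N)$; for \textsc{FPIDecomposeVerify}, taking $\ppre(N):=c\_\ppre_{i-1}(N)\wedge\ppre_i'(N)$ and difference formula $\partial\varphi_i'(N)\wedge\partial\varphi(N)$, check~(a) of line~3, namely $\partial\varphi_i'(N)\wedge\ppre_i'(N-1)\rightarrow\ppre_i(N-1)$, together with the consequence rule converts line~11's triple into exactly 3(c). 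When \textsc{FPIVerify} returns $\true$ through the recursive call of line~15 instead, the triple for 3(c) comes from applying Theorem~\ref{thm:full-prog-ind-sound} to the sub-problem on $\partial\PP_N$, its conditions~2,3 supplied by the present lemma at strictly smaller recursion depth and its condition~1 by Lemmas~\ref{lemma:diff-gen-sound} and \ref{lemma:simp-sound}; hence the whole argument is an induction on recursion depth, legitimate precisely because we claim only ``upon successful termination''. Finally, part~3(b): if $i=0$ then $c\_\ppre_0(N)\equiv\true$ and 3(b) collapses into 3(a); if $i\ge1$ the routine reached the triple-check for index $i$ only after a base-case check of the form $\{\varphi(1)\}\,\PP_1\,\{c\_\ppre_{i-1}(1)\wedge\ppre_i(1)\}$ had passed (line~18 of \textsc{FPIVerify}, or line~17 of \textsc{FPIDecomposeVerify}), and since the decomposition is intended to weaken $\ppre_i$ to $\ppre_i'$ (so $\ppre_i\rightarrow\ppre_i'$), weakening the post-condition and conjoining with 3(a) gives $\{\varphi(1)\}\,\PP_1\,\{\psi(1)\wedge\ppre(1)\}$, which is 3(b).

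I expect the main obstacle to be the bookkeeping in \textsc{FPIDecomposeVerify}: one must verify that the iterated ``decompose, test, possibly backtrack'' search really composes the local tests (a)--(c) of lines~3--5 into the global conditions~2,3 (in particular the relationships among $\ppre_i$, $\ppre_i'$ and $c\_\ppre_i$ along a run, and why consuming the extra conjunct $\partial\varphi_i'(N)$ in the difference formula is exactly what check~(a) requires), that each backtrack restores the index, $c\_\ppre$ and $\partial\varphi(N)$ so that the invariant ``$\partial\varphi(N)$ satisfies 2(a),(b) and the current $c\_\ppre$ passes the base-case check'' survives the whole search, and that the self-recursion of line~22 of \textsc{FPIDecomposeVerify} together with the recursion of line~15 of \textsc{FPIVerify} are organised into one well-founded induction on recursion depth. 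Once this invariant is pinned down, the remaining steps are immediate from the test-to-condition matching described above.
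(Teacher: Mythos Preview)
The paper states this lemma without proof (as it does for Lemmas~\ref{lemma:alg-peelloops}--\ref{lemma:simp-sound}), so there is no argument of the authors' to compare yours against. Your overall plan---fixing $M=1$, reading $\ppre(N)$ and the difference formula off the algorithm state at the moment $\true$ is returned, and then matching each clause of Theorem~\ref{thm:full-prog-ind-sound} to a test actually performed---is the natural one, and the matching you lay out for condition~2, for 3(a), and for 3(c) (including the use of check~(a) of line~3 together with the consequence rule in the \textsc{FPIDecomposeVerify} case, and the induction on recursion depth for the line~15 return) is correct.

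One step deserves more care. Your argument for 3(b) in the \textsc{FPIDecomposeVerify} case hinges on $\ppre_i\rightarrow\ppre_i'$, which you justify by saying the decomposition is ``intended to weaken'' $\ppre_i$. The paper does describe this as the intent, but the algorithm itself never checks it: line~3(a) only gives the \emph{opposite} direction ${\partial\varphi_i'(N)}\wedge\ppre_i'(N-1)\rightarrow\ppre_i(N-1)$, and the base-case test at line~17 establishes $\{\varphi(1)\}\,\PP_1\,\{c\_\ppre_{i-1}(1)\wedge\ppre_i(1)\}$, not the version with $\ppre_i'(1)$. So your derivation of 3(b) rests on an unstated specification of \textsc{NextDecomposition} rather than on any test the routine performs. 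This is not fatal---it is exactly the contract the paper's prose suggests---but it should be stated as an explicit hypothesis on \textsc{NextDecomposition}, since without it I do not see an alternative route to 3(b). The remaining bookkeeping issues you flag (backtracking restoring the invariant, composing the local checks along a search path) are real but routine once this point is pinned down.
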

\vspace{-.7ex}

%
\noindent
While we have presented our technique focusing on a single symbolic
parameter $N$, a straightforward extension works for multiple
independent parameters, multiple independent array sizes, different
induction directions, and non-uniform loop termination conditions.%

\vspace{-.1in}

\paragraph{\bfseries Limitations.}
There are several scenarios under which full-program induction may not
produce a conclusive result. Currently, we only analyze programs with
non-nested loops with $+, -, \times, \div$ expressions in assignments.
We also do not handle branch conditions that are dependent on the
parameter N (this doesn't include loop conditions, which are handled
by unrolling the loop).  The technique also remains inconclusive when
the difference program ${\partial \PP_N}$ does not have fewer loops
than the original program. Reduction in verification complexity of the
program, in terms of the number of loops and assignment statements
dependent on $N$, is crucial to the success of full-program
induction. Finally, our technique may fail to verify a correct program
if the heuristics used for weakest pre-condition either fail or return
a pre-condition that causes violation of the base case check in line
18 of \textsc{FPIVerify}.  Despite these limitations, our experiments
show that full-program induction performs remarkably well on a large
suite of benchmarks.






\vspace{-3ex}
\section{Implementation and Experiments}
\vspace{-1ex}
\label{sec:experiments}
We have implemented our technique in a prototype tool called {\ourtool},
available at~\cite{vajra-artifact}.
It takes a C program in SVCOMP format as input.
The tool, written in {\tt C++}, is built on top of the LLVM/CLANG~\cite{clang} $6.0.0$
compiler infrastructure and uses {\zthree}~\cite{z3} v$4.8.7$ as the SMT solver
to prove Hoare triples for loop-free programs.

We have evaluated {\ourtool} on a test-suite of $42$ safe benchmarks inspired
from different algebraic functions that compute polynomials as well as a
standard array operations such as copy, min, max and compare.
Our programs take a symbolic parameter $N$ which specifies the size of each array
as well as the number of times each loop executes. 
%
Assertions, possibly quantified, are (in-)equalities over array
elements, scalars and (non-)linear polynomial terms over $N$.

All experiments were performed on a Ubuntu 18.04 machine with 16GB RAM and
running at 2.5 GHz. 
We have compared {\ourtool} against
{\viap}(v1.0)~\cite{viap}, {\veriabs}(v1.3.10)~\cite{veriabs},
{\booster}~(v0.2)\cite{booster}, {\vaphor}(v1.2)~\cite{vaphor}
and {\freqhorn}(v3)~\cite{madhukar19}.
C programs were manually converted to mini-Java as required by {\vaphor}
and CHC's as required by {\freqhorn}.
Our results are shown in Table \ref{tab:exp-results}.
{\ourtool} verified $36$ benchmarks, compared to $23$ verified by {\viap},
$12$ by {\veriabs}, $8$ by {\booster}, $5$ each by {\vaphor} and {\freqhorn}.
{\ourtool} was unable to compute the difference program for $5$ benchmarks
and was inconclusive on $1$ benchmark.

\begin{table*}[!t]
\tiny
\begin{minipage}[b]{0.45\hsize}\centering
\begin{tabular}{|c|c|c|c|c|c|c|c|}
\hline
\textsc{Name} & \#L & T1 & T2 & T3 & T4 & T5 & T6 \\ \hline \hline
pcomp     & 3 & \cmark 0.68 & TO & TO & \textbf{?}0.23 & TO & \textbf{?}0.58 \\ \hline
ncomp     & 3 & \cmark 0.68 & TO & TO & \textbf{?}0.41 & TO & \textbf{?}0.68 \\ \hline
eqnm2     & 2 & \cmark 0.52 & TO & TO & \textbf{?}0.07 & TO & \textbf{?}0.59 \\ \hline
eqnm3     & 2 & \cmark 0.53 & TO & TO & \textbf{?}0.07 & TO & \textbf{?}0.56 \\ \hline
eqnm4     & 2 & \cmark 0.51 & TO & TO & \textbf{?}0.07 & TO & \textbf{?}0.60 \\ \hline
eqnm5     & 2 & \cmark 0.55 & TO & TO & \textbf{?}0.07 & TO & \textbf{?}0.58 \\ \hline
sqm  & 2 & \cmark 0.51 & \cmark 69.7 & TO & \textbf{?}0.11 & TO & \textbf{?}0.57 \\ \hline
res1     & 4 & \cmark 0.17 & TO & TO & TO & TO & TO \\ \hline
res1o  & 4 & \cmark 0.18 & TO & TO & TO & TO & TO \\ \hline
res2     & 6 & \cmark 0.20 & TO & TO & TO & TO & TO \\ \hline
res2o  & 6 & \cmark 0.22 & TO & TO & TO & TO & TO \\ \hline
ss1 & 4 & \cmark 0.40 & TO & TO & \xmark 0.13 & \textbf{?}19.2 & \textbf{?}1.7 \\ \hline
ss2 & 6 & \cmark 0.46 & TO & TO & \xmark 0.13 & TO & \textbf{?}9.7 \\ \hline
ss3 & 5 & \cmark 0.35 & TO & TO & \xmark 0.13 & TO & \textbf{?}2.1 \\ \hline
ss4 & 4 & \cmark 0.29 & TO & TO & \xmark 0.13 & TO & \textbf{?}1.6 \\ \hline
ssina & 5 & \cmark 0.41 & \cmark 72.5 & TO & TO & TO & \textbf{?}2.0 \\ \hline
sina1  & 2 & \cmark 0.56 & \cmark 65.4 & TO & TO & TO & TO \\ \hline
sina2  & 3 & \cmark 0.69 & \cmark 66.5 & TO & TO & TO & TO \\ \hline
sina3  & 4 & \cmark 0.83 & TO & TO & TO & TO & TO \\ \hline
sina4  & 4 & \cmark 0.85 & TO & TO & TO & TO & TO \\ \hline
sina5  & 5 & \cmark 0.93 & TO & TO & TO & TO & TO \\ \hline
\end{tabular}
\end{minipage}
\begin{minipage}[b]{0.52\hsize}\centering
\begin{tabular}{|c|c|c|c|c|c|c|c|}
\hline
\textsc{Name} & \#L & T1 & T2 & T3 & T4 & T5 & T6 \\ \hline \hline
zerosum1 & 2   & \cmark 0.33 & \cmark 62.0 & \cmark 11 & \cmark 0.77       & \xmark 0.29 & TO \\ \hline
zerosum2 & 4   & \cmark 0.46 & \cmark 75.8 & \cmark 18 & TO                        & \xmark 1.64 & TO \\ \hline
zerosum3 & 6   & \cmark 0.59 & \cmark 73.1 & \cmark 39 & TO                        & \xmark 3.13 & TO \\ \hline
zerosum4 & 8   & \cmark 0.76 & \cmark 76.1 & TO & \textbf{?}18.2 & \xmark 6.85 & TO \\ \hline
zerosum5 & 10 & \cmark 0.97 & \cmark 80.6 & TO & \textbf{?}16.5 & \xmark 10.4 & TO \\ \hline
zerosumm2 & 4  & \cmark 0.46 & \cmark 71.5 & \cmark 24 & TO                       & \xmark 1.22 & TO \\ \hline
zerosumm3 & 6  & \cmark 0.59 & \cmark 70.9 & TO & TO                      & \xmark 5.22 & TO \\ \hline
zerosumm4 & 8  & \cmark 0.77 & \cmark 76.4 & TO & \textbf{?}16.7  & \xmark 12.39 & TO \\ \hline
zerosumm5 & 10 & \cmark 0.98 & \cmark 81.7 & TO & \textbf{?}18.7 & \xmark 22.8 & TO \\ \hline
zerosumm6 & 12 & \cmark 1.29 & \cmark 86.8 & TO & \textbf{?}16.1 & TO & TO \\ \hline
copy9 & 9 &  \cmark 0.69 & \cmark 86.8 & \cmark 3.91 & \cmark 18.8 & TO & \cmark 0.67 \\ \hline
min & 1 &  \cmark 0.48 & \cmark 23.6 & \cmark 3.82 & \cmark 0.52 & \cmark 0.14 & \cmark 0.13 \\ \hline
max & 1 &  \cmark 0.46 & \cmark 25.4 & \cmark 4.70 & \cmark 1.0 & \cmark 0.28 & \cmark 0.18 \\ \hline
compare & 1 &  \cmark 0.82 & \cmark 18.8 & \cmark 17.9 & \cmark 0.06 & \cmark 0.84 & \cmark 0.31 \\ \hline
conda     & 3 &  \cmark 0.72 & \cmark 13.9 & TO & \cmark 0.07 & \cmark 0.09 & TO \\ \hline
condn     & 1 &  \textbf{?}0.51 & \cmark 14.7 & \cmark 18.9 & \cmark 0.02 & \cmark 0.15 & \cmark 0.20 \\ \hline
condm     & 2 & \textbf{?}0.59 & \cmark 20.5 & \cmark 16.7 & \cmark 0.04 & TO & \textbf{-}  \\ \hline
condg & 3 & \textbf{?}0.52 & TO & TO & TO & TO & TO \\ \hline
modn     & 2 & \textbf{?}0.63 & \cmark 22.6 & TO & \textbf{-} & TO & TO \\ \hline
mods  & 4 & \textbf{?}0.61 & TO & \cmark 18.2 & $ \textbf{-} $ & \textbf{-} & \textbf{-} \\ \hline
modp  & 2 & \textbf{?}0.71 & \cmark 17.3 & \cmark 40 & \textbf{-} & \textbf{?}32 & \textbf{-} \\ \hline
\end{tabular}
\end{minipage}
\vspace{3ex}
\caption{
  First column is the benchmark name.
  Second column indicates the number loops in the benchmark
  (excluding the assertion loop). Successive columns indicate the results generated
  by tools and the time taken where
  T1 is \ourtool,
  T2 is \viap,
  T3 is \veriabs,
  T4 is \booster,
  T5 is \vaphor,
  T6 is \freqhorn.
  \cmark indicates assertion safety,
  \xmark indicates assertion violation,
  \textbf{?} indicates unknown result, and
  \textbf{-} indicates an abrupt stop.
  All the times are in seconds.
  TO is time-out of 100 secs.
 }
\label{tab:exp-results}
\vspace{-4ex}
\end{table*}




{\ourtool} verified $17$ benchmarks on which {\viap} diverged,
primarily due to the inability of {\viap}'s heuristics to get closed
form expressions.  {\viap} verified $4$ benchmarks that could not be
verified by the current version of {\ourtool} due to syntactic
limiations.  {\ourtool}, however, is two orders of magnitude faster
than {\viap} on programs that were verified by both.  {\ourtool}
proved $28$ benchmarks on which {\veriabs} diverged.  {\veriabs} ran
out of time on programs where loop shrinking and merging abstractions
were not strong enough to prove the assertions.  {\veriabs} reported 1
program as unsafe due to the imprecision of its abstractions and it
proved $4$ benchmarks that {\ourtool} could not.
{\ourtool} verified $30$ benchmarks that {\booster} could not.
{\booster} reported $4$ benchmarks as unsafe due to imprecise
abstractions, its fixed-point computation
engine reported unknown result on $12$ benchmarks and it ended
abruptly on $3$ benchmarks. 
{\booster} also proved $2$ benchmarks that couldn't be handled by the
current version of {\ourtool} due to syntactic limitations.
{\ourtool} verified $32$ benchmarks on which {\vaphor} was inconclusive.
Distinguished cell abstraction in {\vaphor}  is unable to prove safety of programs,
when the value at each array index needs to be tracked.
{\vaphor} reported $9$ programs unsafe due to imprecise abstraction,
returned unknown on 2 programs and ended abruptly on $1$ program.
{\vaphor} proved a benchmark that {\ourtool} could not.
%
{\ourtool} verified $32$ programs on which {\freqhorn} diverged,
especially when constants and terms that appear in the inductive invariant
are not syntactically present in the program.
{\freqhorn} ran out of time on $22$ programs,
reported unknown result on $12$ and
ended abruptly on $3$ benchmarks.
{\freqhorn} verified a benchmark with a single loop that {\ourtool} could
not.
%
On an extended set of $231$ benchmarks,
{\ourtool} verified $110$ programs out of $121$ safe programs,
falsified $108$ out of $110$ unsafe programs, and was inconclusive
on the remaining $13$ programs.





\vspace{-2ex}
\section{Conclusion}
\label{sec:conc}
\vspace{-2ex}
We presented a novel property-driven verification method that performs
induction over the entire program via parameter $N$.  Significantly,
this obviates the need for loop-specific invariants.
Experiments show that full-program induction performs remarkably well
vis-a-vis state-of-the-art tools for analyzing array manipulating
programs.  Further improvements in the algorithms for computing
difference programs and for strengthening of pre- and post-conditions
are envisaged as part of future work.



\section*{Data Availability Statement}
\vspace{-2ex}
The datasets generated and analyzed during the current study are available in the figshare repository: https://doi.org/10.6084/m9.figshare.11875428.v1

\bibliographystyle{splncs04}
\bibliography{fpi}


\vfill

{\small\medskip\noindent{\bf Open Access} This chapter is licensed under the terms of the Creative Commons\break Attribution 4.0 International License (\url{http://creativecommons.org/licenses/by/4.0/}), which permits use, sharing, adaptation, distribution and reproduction in any medium or format, as long as you give appropriate credit to the original author(s) and the source, provide a link to the Creative Commons license and indicate if changes were made.}

{\small \spaceskip .28em plus .1em minus .1em The images or other third party material in this chapter are included in the chapter's Creative Commons license, unless indicated otherwise in a credit line to the material.~If material is not included in the chapter's Creative Commons license and your intended\break use is not permitted by statutory regulation or exceeds the permitted use, you will need to obtain permission directly from the copyright holder.}

\medskip\noindent\includegraphics{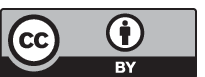}


\end{document}